\documentclass[11pt]{article} 
\usepackage[margin=1in]{geometry}
\usepackage{amsmath, amssymb, amsthm}
\usepackage{microtype}
\usepackage{lmodern}
\usepackage{hyperref}
\usepackage{cleveref}

\newtheorem{theorem}{Theorem}
\theoremstyle{definition}
\newtheorem{definition}{Definition}
\newtheorem{example}{Example}
\theoremstyle{remark}
\newtheorem{remark}{Remark}

\title{Sketch-Oriented Databases}

\author{ Dominique Duval\thanks{LJK, University Grenoble-Alpes,
    France, dominique.duval@univ-grenoble-alpes.fr} \\ \and Rachid
  Echahed\thanks{LIG, CNRS and University Grenoble-Alpes, France, rachid.echahed@imag.fr} }

\date{}

\usepackage{stmaryrd} 
\usepackage[all]{xy}
\SelectTips{cm}{}
\usepackage{tikz}
\usetikzlibrary{positioning,calc}
\usetikzlibrary{arrows,arrows.meta}
\usetikzlibrary{shapes}
\usetikzlibrary{shapes.misc,shapes.symbols}
\usetikzlibrary{shadows}
\usetikzlibrary{shapes.multipart}
\tikzset{edge/.style = {->,> = latex'}}

\newcommand{\stimes}{\!\times\!} 
 
\newcommand{\scirc}{\!\circ\!} 

\newcommand{\la}{\colon\!\!\colon\!} 
\newcommand{\tyc}{\colon\!} 
\newcommand{\subtype}{\!\sqsubseteq\!} 
\newcommand{\monoto}{\xymatrix@C=1pc{\ar@{>->}[r]&}} 
\newcommand{\isoto}{\xymatrix@C=1pc{\ar[r]^{\simeq}&}}
\newcommand{\coneto}[3]{\underline{#1}:#2\to #3} 
\newcommand{\edto}[1]{\!-\!\!{#1}\!\!\to\!} 
\newcommand{\tr}[3]{{#1}\edto{#2}{#3}} 
 
\newcommand{\tre}[3]{\tr{(#1)}{[#2]}{(#3)}} 
 
\newcommand{\trp}[3]{\tr{(#1)}{#2}{(#3)}} 
\newcommand{\prt}{\shortrightarrow} 
 
\newcommand{\prttt}{\!\prt\!\!} 
\newcommand{\sm}{\underline} 
\newcommand{\cric}{\centerdot} 
\newcommand{\scric}{\!\centerdot\!} 
\newcommand{\id}{\mathit{id}}

\newcommand{\Qu}{\mathbb{Q}} 
\newcommand{\Rk}{\mathbb{R}} 
\newcommand{\Sk}{\mathbb{S}} 
\newcommand{\Tk}{\mathbb{T}} 
\newcommand{\sk}{\mathrm{s}} 
\newcommand{\sketch}[1]{\begin{array}{|c|}\hline #1\hline\end{array}}
\newcommand{\empt}{\mathit{empty}} 
\newcommand{\concat}{\mathit{concat}} 
\newcommand{\Hom}{\mathit{Hom}}
\newcommand{\Mod}{\mathit{Mod}} 
\newcommand{\Prem}{\mathit{Prem}} 
\newcommand{\Conc}{\mathit{Conc}} 
\newcommand{\Th}{\mathit{Th}} 
\newcommand{\Stt}{\mathit{Stt}}  
\newcommand{\Sch}{\mathit{Sch}} 
\newcommand{\Yon}{\mathcal{Y}\!\mathit{on}}
\newcommand{\YPrem}{\mathcal{P}} 
\newcommand{\YConc}{\mathcal{C}} 
\newcommand{\qu}{\mathit{q}} 
\newcommand{\lqu}{{\ell\mathit{q}}} 
\newcommand{\slqu}{{s\ell\mathit{q}}} 
\newcommand{\avp}{\mathit{av}}
\newcommand{\tabl}{\mathit{table}}
\newcommand{\cqu}{\mathit{cq}} 
\newcommand{\clqu}{{\ell\mathit{cq}}}
\newcommand{\obj}{\mathit{set}} 
\newcommand{\Pe}{\mathit{Pe}} 
\newcommand{\Pa}{\mathit{Pa}} 
\newcommand{\Jo}{\mathit{Jo}} 
\newcommand{\Bo}{\mathit{Bo}} 
\newcommand{\au}{\mathit{au}} 
\newcommand{\rep}{\mathit{rep}} 
\newcommand{\String}{\mathit{String}} 
\newcommand{\Cell}{{\it Cell}} 
\newcommand{\NECell}{{\it NECell}} 
\newcommand{\Roww}{{\it Row}} 
\newcommand{\row}{{\it row}} 
\newcommand{\Coll}{{\it Col}} 
\newcommand{\col}{{\it col}} 
\newcommand{\Ent}{{\it Ent}} 
\newcommand{\ent}{{\it ent}} 
\tikzset{MyNode/.style = {rounded rectangle,draw,fill=green!20}}
\tikzset{MyVal/.style = {rounded rectangle,draw,fill=blue!10}}
\tikzset{MyEdge/.style = {rectangle,draw,fill=white}}
\tikzset{MyAtt/.style = {fill=white}} 
\tikzset{MyNodeDouble/.style = {
  rounded corners,rectangle split,rectangle split parts=2,
  rectangle split part fill={green!20,blue!10},draw}}
\tikzset{MyEdgeDouble/.style = {   
  rectangle split,rectangle split parts=2,
  rectangle split part fill={white,blue!10},draw}}
\newcommand{\subsec}[1]{\subsection{#1}} 
\newcommand{\ssubsec}[1]{\subsubsection*{$\bullet$ #1.}} 
\newcommand{\para}[1]{\noindent{}\large\textbf{$\bullet$ #1.}} 
\newcommand{\nota}{\noindent\textbf{Notation. }} 

\begin{document}
\maketitle

\begin{abstract}

\noindent  This paper introduces sketch-oriented databases, a categorical
  framework that encodes database paradigms as finite-limit sketches
  and individual databases and schemas as set-valued models.  It
  illustrates the formalism through graph-oriented paradigms such as
  quivers, RDF triplestores and property graphs. It also 
  shows how common graph features such as labels, attributes, typing, and
  paths, are uniformly captured by sketch constructions. Because paths
  play an important role in queries, we propose inference rules
  formalized via localizers to compute useful paths lazily; such
  localizers are also useful for tasks like database type conformance.
  Finally, the paper introduces stuttering sketches, whose aim is to
  facilitate modular composition and scalable model growth: stuttering
  sketches are finite-limit sketches in which relations are specified
  by a single limit instead of two nested limits, and the paper proves
  that finite unions of models of a stuttering sketch are pointwise
  colimits.

\end{abstract}

\noindent
\textbf{Keywords:} Sketches, Category theory, Databases  

\section{Introduction}
\label{sec:intro} 

The evolution of data management systems has witnessed a significant
shift from rigid, table-centric relational paradigms\footnote{We use
\emph{database paradigm} rather than \emph{database model} to avoid
confusion with other model-related concepts discussed in the paper.}
\cite{Codd1970,Date2003} to more expressive and flexible paradigms
such as graph databases. These systems, ranging from RDF graphs or triplestores
\cite{RDF2014,SPARQL2013} to property graphs
\cite{Angles2018,GQL2024}, are designed to capture complex
relationships and semantic structures inherent in modern data. Yet,
despite their practical success, the different graph database
paradigms still lack a unified formal foundation that supports
rigorous reasoning, compositional semantics, and principled inference.
This paper proposes a categorical framework for various database
paradigms including graph-oriented ones, based on \emph{finite-limit
sketches} \cite{CTCS,TTT}, offering a unified and expressive model-theoretic approach
to database design, semantics and inference.

A graph database typically consists of nodes and edges, where edges
represent binary relationships between nodes. These edges can be
concatenated to form paths, enabling the representation of
combined relationships and navigational queries. 
In categorical
terms, such a structure resembles a small category, where objects
correspond to nodes and morphisms to paths. However, real-world graph
databases incorporate additional features, such as attribute-value pairs,
labels, and typing schemes, that go beyond the basic categorical
structure.

The application of category theory to various facets of database
systems has been explored in the literature since early 1990s;
representative contributions include
\cite{Lellahi1990Categorical,Rosebrugh1991Relational,Tuijn1992Views,Baclawski1994Categorical,Hofstede1996Conceptual,Tuijn1996CGOOD,SpivakW15,WisneskyBJSS17}. These
investigations concentrate mainly on database semantics -- schemas,
databases, and the relationships between them (typing, translation,
and related notions). The relational paradigm has received the
greatest attention, while other paradigms such as Entity-Relationship
(ER), object-oriented databases, and graph-oriented models (for
example RDF) have been explored to a lesser extent.

Building on these categorical foundations, several authors have
employed the concept of sketches \cite{CTCS,TTT}, a diagrammatic
formalism rooted in category theory, to specify diverse aspects of
database paradigms. Notable examples include
\cite{Cadish1996Heterogeneous,johnson_rosebrugh_2002_sketch,ERA,Majkic2015Sketches},
where sketches serve as a flexible language for modeling structural
and semantic properties across different paradigms.

The emergence of multi-paradigm database systems \cite{Lu2018MultiModel}, which integrate multiple databases as models of possibly different paradigms within a single backend, makes categorical tools and sketches especially relevant. Recent work applies categorical constructions to multi-paradigm settings, introducing notions such as schema categories and database categories to link heterogeneous databases \cite{Thiry2018Categories,Liu2019MultiModel,Holubova2021Categorical}. 
These approaches extend earlier work by using categories not only to describe schemas and databases, but also to articulate the relationships among different paradigms.

In contrast to these prior efforts, our approach takes a more
foundational stance by moving one level up. Rather than directly
specifying schemas or databases, we begin by formalizing the paradigms
themselves. Schemas and databases are then treated as models of these
paradigms. 
To this end, we employ the syntactic framework of sketches,
specifically, finite-limit sketches~\cite{CTCS,TTT}, as a
meta-language for specifying core database paradigms such as
relational, RDF triplestore, and property-graph paradigms among others,
thereby grounding their structure and behavior in category theory
\cite{CWM,HCA}.

Finite-limit sketches present categories equipped with chosen finite
limits (products, pullbacks, equalizers, etc.) by means of a quiver (made of nodes and edges) 
enriched with designated loops, triangles and finite cones. A sketch
generates a theory (a category with finite limits) 
and a category of set-valued models (functors preserving those
limits). For example, the sketch for quivers captures the basic
structure of nodes and edges, while extensions of this sketch can
represent labeled quivers, triplestores, and property
graphs. Interpreting a database paradigm as a finite-limit sketch and
an individual database as a model of that sketch brings several
practical advantages: well-behaved limits and colimits in the model
category, canonical adjunctions induced by sketch morphisms, and
direct access to Yoneda lemma for relating syntax and semantics. These
categorical tools capture naturally common database constructions
(typing, attributes, path generation) while making their semantics
explicit and compositional. 
For instance, RDF triplestores
can be modeled as labeled quivers with edges determined by their  
source, target, and label. Property graphs, which include
attribute-value pairs and typing information, are modeled by extending the
previous sketch with additional points and cones representing attributes and
schema constraints. 
A table, as in relational databases, is merely made of cells, which are
(row, column) pairs, and the content of a table can be
seen as a partial function from cells to values. 
These constructions preserve the sketch-oriented
nature of the paradigm, ensuring that all database features are
captured categorically.

Beyond structural modeling, our framework supports inference through 
the notion of \emph{localizer}, a kind of morphism of sketches 
which is an analog for sketches 
of the categorical notion of localization. 
The idea is to distinguish between ``idealized'' databases (with all paths, 
all possible types for each entity, etc.) and more ``concrete'' databases
(with ``lazy'' construction of paths, etc.). 
Each concrete database is a \emph{presentation} of an idealized 
database. The inference process is used for modifying
a presentation without modifying its presented idealized database,
thus, without modifying its semantics. 
A \emph{sketch-oriented inference system} is associated to a localizer and 
its \emph{inference rules} are arrows which are taken to invertible
arrows by the localizer.
The application of inference rules corresponds to pushouts in the category
of concrete databases. 
This approach is an alternative to   
logical inference and supports reasoning over paths, typing
hierarchies, and semantic constraints.

To address scalability and compositionality, we introduce
\emph{stuttering sketches}, a new class of finite-limit sketches
designed to control model size and to support pointwise union of models. 
Stuttering sketches simplify the representation of relations.
This innovation is particularly relevant for managing large graphs. 
We prove that finite unions of 
models of a stuttering sketch are pointwise colimits, ensuring
compositional semantics and tractable inference.

Throughout the paper, we emphasize the categorical foundations and
semantic clarity provided by sketches, advocating for their adoption
as a unifying framework for graph-based data systems and beyond.
 Section~\ref{sec:structure} formalizes sketch-oriented database setting and demonstrates its application to various graph database
 paradigms. Section~\ref{sec:inference} introduces sketch-oriented
 inference systems, defines localizers, and applies them to path
 construction.
Section~\ref{sec:stutter} presents stuttering sketches,
 establishes their theoretical properties, and explores their role in
 model composition. Concluding remarks are provided in
 Section~\ref{sec:conclusion}. We assume that the reader is familiar
 with category theory (see, e.g.,\cite{CWM,HCA} for missing notions). 
 Appendix~\ref{app:tables} has been added to illustrate a
 sketch-based presentation of the relational database paradigm.

\section{Sketch-oriented databases} 
\label{sec:structure}

The aim of this Section is to recall the definitions of finite-limit sketches
and their models, and to provide examples showing that 
a finite-limit sketch can be seen as a database paradigm 
and the models of this sketch as the corresponding databases. 
In Section~\ref{ssec:structure-sketches} some categorical tools are 
reminded, their application to databases is described in 
Section~\ref{ssec:structure-databases} and this is applied 
to several graph database paradigms in Section~\ref{ssec:structure-instances}. 

\subsec{Finite-limit sketches}
\label{ssec:structure-sketches} 

We rely on \cite{CTCS} for an introduction to finite-limit sketches, 
with some minor modifications in the terminology and 
in the way equations are introduced.  
Finite-limit sketches are presentations of categories with finite limits. 
They are defined from simple kinds of graphs called quivers.  
A \emph{quiver} $\Qu$ is made of 
a set of \emph{points} (or \emph{objects}) and a set of \emph{arrows} 
with two functions \emph{domain} and \emph{codomain} from arrows to points. 
An arrow $f$ with domain $X$ and codomain $Y$ is denoted $f:X\to Y$. 
A morphism of quivers is made of 
a function on points and a function on arrows which are compatible with  
domains and codomains. 
In a quiver $\Qu$,  
a \emph{loop} is an arrow $f:X\to X$ and a \emph{triangle} is 
made of three arrows $f:X\to Y$, $g:Y\to Z$ and $h:X\to Z$.
A \emph{diagram} $D$ of \emph{shape} $J$, where $J$ is a quiver, 
is a morphism of quivers from $J$ to $\Qu$. 
A \emph{cone} $C$ 
is made of a diagram $D:J\to\Qu$ called the \emph{base} of $C$, 
a point $V$ of $\Qu$ called the \emph{vertex} of $C$  
and arrows $p_X:V\to D(X)$ of $\Qu$, for all $X$ in $J$, called the  
\emph{projections} of $C$. 
It is a \emph{finite cone} when $J$ is finite. 
At this stage, no commutativity is assumed. 

A \emph{finite-limit sketch} $\Sk$ is made of a quiver $\Qu$,
called the \emph{underlying quiver} of $\Sk$, with 
a set of loops of $\Qu$, called the \emph{potential identities} of $\Sk$, 
a set of triangles of $\Qu$, called the \emph{potential composites} of $\Sk$,  
and a set of finite cones of $\Qu$, called the \emph{potential finite limits} 
of $\Sk$. 
We use the same notations as in categories ($\id_X$ and $g\circ f$) 
for potential identities and potential composites. 
The word ``potential'' here 
means that these features, which have no specific property in the sketch, 
become ``actual'' features in each model (as defined below). 
For instance, a potential identity, which is simply a loop in the sketch, 
becomes an identity function in each model. 
Note that ``potential limits'' may be given other names, 
like ``formal limits'' or ``distinguished cones''.  
For instance, the potential limits below, with their abbreviated notations,
mean respectively 
that the arrow $f$ is a potential monomorphism 
and that $f\equiv g$ is a potential equality, also called an \emph{equation}. 

\footnotesize
$$ \begin{array}{|cc|c|cc|}
\cline{1-2}\cline{4-5}
\raisebox{-3ex}{$
\begin{array}{l}
f:V\monoto X \\
\mbox{potential mono} \\ 
\end{array} 
$}
& 
\xymatrix@C=.5pc@R=.6pc{
& V \ar[dl]_{\id_V} \ar[dr]^{\id_V} & \\ 
V \ar[dr]_{f} && V \ar[dl]^{f} \\ 
& X & \\ 
} 
& \null\hspace{1cm}\null &
\raisebox{-3ex}{$
\begin{array}{l}
f\equiv g:V\to X \\
\mbox{equation} \\ 
\end{array} 
$}
& 
\xymatrix@C=1pc@R=1.5pc{
& V \ar[dl]_{\id_V} \ar[dr]^{f} & \\ 
V \ar@/^/[rr]^{f} \ar@/_/[rr]_{g} && X \\ 
} 
\\
\cline{1-2}\cline{4-5}
\end{array}
$$
\normalsize

On sets, a relation is a subset of a cartesian product. This notion is 
generalized as follows. In a category, a \emph{relation over a diagram $D$}  
is made of two consecutive limits: 
first the limit of $D$, then a mono with codomain the vertex of the limit. 
Note that a relation over a point $X$ is simply a mono with codomain $X$.
A similar definition holds in a sketch, with two consecutive potential
limits. 

A morphism of finite-limit sketches is a morphism of quivers 
which preserves all potential features. 

The \emph{theory} $\Th(\Sk)$ of a finite-limit sketch $\Sk$ 
is the category with finite limits freely generated by $\Sk$ 
where all potential features of $\Sk$ become actual features.  
Every morphism of finite-limit sketches $\sk:\Sk\to\Sk'$ 
generates a  finite-limit preserving functor $\Th(\sk):\Th(\Sk)\to\Th(\Sk')$. 
The morphism $\sk$ is called an \emph{equivalence}  
when the functor $\Th(\sk)$ is an equivalence of categories, 
and the generated equivalence relation is called the 
\emph{equivalence of sketches}.  
For instance, adding a limit over a given diagram is an equivalence of sketches.
A \emph{model} $M$ of a finite-limit sketch $\Sk$ 
(in this paper models are set-valued models, unless explicitly stated) 
associates to each point $X$ of $\Sk$ a set $M(X)$ and 
to each arrow $f:X\to Y$ of $\Sk$ a function $M(f):M(X)\to M(Y)$,
in such a way that each potential feature of $\Sk$ becomes 
an actual feature in the category of sets.
The elements of $M(X)$ are called the \emph{elements of $M$ of sort} $X$. 
A \emph{morphism} $m:M\to M'$ of models of $\Sk$ is a family of 
functions $m_X:M(X)\to M'(X)$ for each point $X$ in $\Sk$, 
such that $m_Y\circ M(f)=M'(f)\circ m_X$ for each arrow $f:X\to Y$ 
in $\Sk$. 
This defines the category $\Mod(\Sk)$ of models of $\Sk$, 
which is equivalent to the category of finite-limit-preserving 
functors from $\Th(\Sk)$ to the category of sets with the natural 
transformations as arrows.
Thus, the categories of models of equivalent finite-limit sketches 
are equivalent categories.

Obviously every morphism of finite-limit sketches $\sk:\Rk\to\Sk$ determines 
a functor $U_\sk:\Mod(\Sk)\to\Mod(\Rk)$ such that 
for each model $M$ of $\Sk$ the model $U_\sk(M)$ of $\Rk$ is defined by 
$U_\sk(M)(-)=M(\sk(-))$. 
An important property of finite-limit sketches is that the functor $U_\sk$
has a left adjoint $F_\sk:\Mod(\Rk)\to\Mod(\Sk)$. 
We call $U_\sk$ and $F_\sk$ the \emph{underlying functor} 
and the \emph{free functor} associated to $\sk$, respectively. 
Another important property of a finite-limit sketch $\Sk$ is that all limits 
and colimits exist in the category $\Mod(\Sk)$.
In addition, all limits in $\Mod(\Sk)$ are pointwise, 
so that they are easily computed from limits of sets. 
Colimits in $\Mod(\Sk)$ are pointwise when $\Sk$ is simply a quiver,
but they are not pointwise in general.
The importance of colimits of models and their computation 
are handled in Sections~\ref{sec:inference} and~\ref{sec:stutter}. 
Note that finite-limit sketches and their categories of models are known 
under several names, corresponding to various points of view: for instance,  
finite-limit sketches correspond to \emph{essentially algebraic 
specifications}  
and their categories of models are the 
\emph{locally finitely presentable categories}.  
A comparison of models of finite-limit sketches with models 
of some kind of generalized Horn theories can be found in \cite{Barr1989}. 
Moreover, there are many variants in the definition of sketches. 
For instance, sketches may be defined either from quivers or from categories.
Most definitions of sketches use diagrams for specifying equalities,
whereas for this purpose we use partial composition and equalizers, 
which are finite limits. 
In this way, the theory and the models of a sketch are defined  
simply by dropping the word ``potential''. 
However, any definition of sketches could be used instead. 

It is often relevant to generalize the notion of morphism of sketches 
into a notion of ``morphism up to equivalence'', based on categorical fractions, 
which we call ``pleomorphism''\footnote{
In biology, pleomorphism is the ability of some micro-organisms to alter 
their shape or size.}. 
A \emph{pleomorphism} $\sk$ from a sketch $\Rk$ to a sketch $\Sk$ 
is made of a sketch $\Sk'$ with two morphisms $\sk_n:\Rk\to\Sk'$ 
and $\sk_d:\Sk\to\Sk'$, repectively called the \emph{numerator} 
and the \emph{denominator} of $\sk$,  
such that $\sk_d$ is an equivalence of sketches. 
Pleomorphisms can be composed, as categorical fractions, 
thanks to pushouts of sketches. 
Every morphism $\sk:\Rk\to\Sk$ can be seen as a pleomorphism with $\sk$ 
as numerator and the identity of $\Sk$ as denominator. 
The notation $\sk:\Rk\to\Sk$ for morphisms is extended to pleomorphisms. 
In fact, like a morphism of sketches, a pleomorphism $\sk:\Rk\to\Sk$
generates a finite-limit-preserving functor $\Th(\sk):\Th(\Rk)\to\Th(\Sk)$ 
and determines an adjunction $F_\sk\dashv U_\sk$ 
between $\Mod(\Sk)$ and $\Mod(\Rk)$. 

In this paper, \emph{sketch} stands for finite-limit sketch   
and the word \emph{potential} may be omitted when there is no ambiguity.

\subsec{Sketch-oriented databases} 
\label{ssec:structure-databases} 

In this paper, we define a \emph{database paradigm} as a sketch $\Sk$ 
and a \emph{database} with respect to this paradigm as a model of $\Sk$. 

It happens that both sketches and graph databases are built from 
\emph{quivers}. 
In order to avoid confusion between the ``meta level'' of
sketches and the ``data level'' of databases, we use two distinct vocabularies
for this unique notion.  
When dealing with databases, a \emph{quiver} 
is made of a set of \emph{nodes}, a set of \emph{edges}, 
and two functions from edges to nodes, respectively called 
\emph{source} and \emph{target}. 
Thus, a quiver is a model of the sketch: 

\footnotesize 
$$ \Sk_\qu = 
\sketch{
\xymatrix@C=2pc@R=1pc{
N && E \ar@/_/[ll]_{s} \ar@/^/[ll]^{t} \\ 
} 
\\ } 
$$ 
\normalsize

In the context of graph databases, several edges may have the same 
source and target. In addition, 
edges are often associated to a third data, typically a string, 
which we call a \emph{label}. 
We define a \emph{labeled quiver} as a quiver where every edge 
is associated to a label.
Thus, a labeled quiver is a model of the sketch:  

\footnotesize
$$ \Sk_{\lqu} = 
\sketch{
\xymatrix@C=2pc@R=1pc{
N && E \ar@/_/[ll]_{s} \ar@/^/[ll]^{t} \ar[r]^{u} & L \\ 
}
\\ } 
$$ 
\normalsize

\nota  In this paper, 
a node $n$ in a quiver is denoted $(n)$, and an edge $e$ with source 
$n_1$ and target $n_2$ is denoted $\tre{n_1}{e}{n_2}$. 
In a labeled quiver, an edge $e$ 
with source $n_1$, target $n_2$ and label $\ell$ is denoted 
$\tre{n_1}{e\la\ell}{n_2}$. 
When there is only one edge from $n_1$ to $n_2$ with label $\ell$ 
this edge is called \emph{strongly labeled} 
and this notation may be simplified as either $\tre{n_1}{\la\ell}{n_2}$ or 
simply $\trp{n_1}{\ell}{n_2}$.

\begin{example}
\label{ex:structure-lq} 
We build a toy database with the following intended meaning:
M.~Barr (MB) is the author of paper \cite{Barr1989} (MHT), 
M.~Barr (MB) and C.~Wells (CW) are the coauthors of the books 
\cite{CTCS} (CTCS) and \cite{TTT} (TTT) which were  
reprinted in Reprints in Theory and Applications of Categories (RTAC), 
and S.~Mac~Lane (SML) is the author of both editions of the book 
\cite{CWM} (CWM).  

\scriptsize
$$ 
\begin{tikzpicture}[scale=0.8]  
    \node[MyNode] (mb) at (0,2) {MB};
    \node[MyNode] (cw) at (0,0) {CW};
    \node[MyNode] (sml) at (0,-0.8) {SML};
    \node[MyNode] (mht) at (4,2) {MHT};
    \node[MyNode] (ttt) at (4,1) {TTT};
    \node[MyNode] (ctcs) at (4,0) {CTCS};
    \node[MyNode] (cwm) at (4,-0.8) {CWM};
    \node[MyNode] (rtac) at (8,0.5) {RTAC};
    \draw[edge] (mb) to (mht);
    \node[MyEdge] at (2,2) {$\!\la\au\!$};
    \draw[edge] (mb) to (ttt);
    \node[MyEdge] at (2,1.5) {$\!\la\au\!$};
    \draw[edge] (mb) to (ctcs);
    \node[MyEdge] at (2,1) {$\!\la\au\!$};
    \draw[edge] (cw) to (ttt);
    \node[MyEdge] at (2,0.5) {$\!\la\au\!$};
    \draw[edge] (cw) to (ctcs);
    \node[MyEdge] at (2,0) {$\!\la\au\!$};
    \draw[edge][bend left=12] (sml) to (cwm);
    \node[MyEdge] at (2,-0.5) {$\!e_1\la\au\!$};
    \draw[edge][bend right=12] (sml) to (cwm);
    \node[MyEdge] at (2,-1.0) {$\!e_2\la\au\!$};
    \draw[edge] (ttt) to (rtac);
    \node[MyEdge] at (6,0.75) {$\!\la\rep\!$};
    \draw[edge] (ctcs) to (rtac);
    \node[MyEdge] at (6,0.25) {$\!\la\rep\!$};
    \draw[rounded corners=3mm] (-0.8,-1.4) rectangle (8.8,2.4); 
\end{tikzpicture}
$$
\normalsize
\end{example}

A \emph{strongly labeled quiver} is a labeled quiver where 
every edge is strongly labeled, which means that 
every edge is identified by the triple made of its source, target and label. 
This is a kind of \emph{overloading}, since multiple edges 
may share the same label.
Thus, a strongly labeled quiver is a model of the sketch:  

\footnotesize
$$ \Sk_\slqu = 
\sketch{
\raisebox{4ex}{$
\xymatrix@C=1.5pc@R=.3pc{
&& E' \ar@/_/[lld]_{s'} \ar@/^/[lld]^{t'} \ar[rd]^(.6){u'} & \\ 
N && & L \\ 
&& E \ar[uu]_(.4){m_E} &  \\ 
}
$} 
\quad 
\begin{array}{l}
\mbox{with a product:} \\
\xymatrix@C=.7pc@R=1pc{
& E' \ar[dl]_{s'} \ar[d]^{t'} \ar[dr]^{u'} & \\ 
N & N & L \\ 
} 
\\
\end{array} 
\quad 
\begin{array}{l}
\mbox{and a mono:} \\ 
\xymatrix@C=1.5pc{
E\, \ar@{>->}[r]^{m_E} & E' \\ 
} 
\\
\end{array} 
\\ } 
$$ 
\normalsize

\nota  
As above, in a strongly labeled quiver the notation $\tre{n_1}{\la\ell}{n_2}$ 
for edges is simplified as $\trp{n_1}{\ell}{n_2}$.

\begin{example}
\label{ex:structure-slq}
The labeled quiver in Example~\ref{ex:structure-lq} becomes strongly labeled 
when both edges from (SML) to (CWM) are merged. The intended meaning 
of $\trp{n_1}{\au}{n_2}$ is that $n_1$ is an author of $n_2$. 

\scriptsize
$$ 
\begin{tikzpicture}[scale=0.8]  
    \node[MyNode] (mb) at (0,2) {MB};
    \node[MyNode] (cw) at (0,0) {CW};
    \node[MyNode] (sml) at (0,-0.6) {SML};
    \node[MyNode] (mht) at (4,2) {MHT};
    \node[MyNode] (ttt) at (4,1) {TTT};
    \node[MyNode] (ctcs) at (4,0) {CTCS};
    \node[MyNode] (cwm) at (4,-0.6) {CWM};
    \node[MyNode] (rtac) at (8,0.5) {RTAC};
    \draw[edge] (mb) to (mht);
    \node[MyAtt] at (2,2) {$\!\au\!$};
    \draw[edge] (mb) to (ttt);
    \node[MyAtt] at (2,1.5) {$\!\au\!$};
    \draw[edge] (mb) to (ctcs);
    \node[MyAtt] at (2,1) {$\!\au\!$};
    \draw[edge] (cw) to (ttt);
    \node[MyAtt] at (2,0.5) {$\!\au\!$};
    \draw[edge] (cw) to (ctcs);
    \node[MyAtt] at (2,0) {$\!\au\!$};
    \draw[edge] (sml) to (cwm);
    \node[MyAtt] at (2,-0.6) {$\!\au\!$};
    \draw[edge] (ttt) to (rtac);
    \node[MyAtt] at (6,0.75) {$\!\rep\!$};
    \draw[edge] (ctcs) to (rtac);
    \node[MyAtt] at (6,0.25) {$\!\rep\!$};
    \draw[rounded corners=3mm] (-0.8,-1.0) rectangle (8.8,2.4); 
\end{tikzpicture}
$$
\normalsize
\end{example}

Obviously every labeled quiver is a quiver
and every strongly labeled quiver is a labeled quiver.
This can be seen as the application of the underlying functors associated 
respectively to the inclusion of $\Sk_\qu$ in $\Sk_\lqu$ 
and to the pleomorphism from $\Sk_\lqu$ to $\Sk_\slqu$ which takes 
$N,E,L,s,t,u$ to $N,E,L,s'\circ m_E,t'\circ m_E,u'\circ m_E$. 

\subsec{Some sketch-oriented database paradigms} 
\label{ssec:structure-instances} 

In this Section we check that several graph database paradigms 
can be seen as sketch-oriented.  Appendix~\ref{app:tables} shows that this is 
also the case for the relational database paradigm.   

\ssubsec{RDF graphs} 
\label{sssec:structure-rdf}

A graph for the \emph{Resource Description Framework}, or \emph{RDF graph}, 
or \emph{RDF triplestore}, is a strongly labeled quiver with some constraints 
on the nature of nodes and edges \cite{RDF2014}. It describes 
\emph{statements} about \emph{resources}. 
A resource corresponds to a node and a statement to an edge, 
which is called an \emph{RDF triple}. 
The label of an RDF triple is called its \emph{predicate}, and its source 
and target are called its \emph{subject} and \emph{object}. 

\ssubsec{ER diagrams} 
\label{sssec:structure-erd}

An \emph{Entity-Relationship diagram}, or \emph{ER diagram}, 
is a strongly labeled quiver which describes interrelated things of interest 
in a specific domain of knowledge. 
It is used for designing databases, either relational or graph-oriented. 
The things of interest are called \emph{entities}, they are classified by 
\emph{entity types}, which correspond to nodes.
An edge between two nodes specifies a \emph{relationship} 
between the entities which are instances of the corresponding entity types. 
The labels of edges are the \emph{relationship types}. 

\begin{example}
\label{ex:structure-erd}
The following ER diagram will be used as a schema for graph databases 
in Example~\ref{ex:structure-schema}. The entity types are 
$\Pe$ for persons, $\Pa$ for papers, $\Bo$ for books and $\Jo$ for journals.
The relationship types are $\au$ for ``is an author of'' and $\rep$ for
``is reprinted in''. 

\scriptsize
$$ 
\begin{tikzpicture}[xscale=0.7,yscale=0.8]  
    \node[MyNode] (pe) at (0,2) {$\Pe$};
    \node[MyNode] (pa) at (4,2) {$\Pa$};
    \node[MyNode] (bo) at (4,1) {$\Bo$};
    \node[MyNode] (jo) at (8,1) {$\Jo$};
    \draw[edge] (pe) to (pa);
    \node[MyAtt] at (2,2) {$\!\au\!$};
    \draw[edge] (pe) to (bo);
    \node[MyAtt] at (2,1.5) {$\!\au\!$};
    \draw[edge] (bo) to (jo);
    \node[MyAtt] at (6,1) {$\!\rep\!$};
    \draw[rounded corners=3mm] (-0.9,0.6) rectangle (9.0,2.4); 
\end{tikzpicture}
$$
\normalsize

\end{example}

\ssubsec{Attribute-value pairs} 
\label{sssec:structure-avpairs}

Starting from any kind of sketch-oriented database paradigm $\Sk$  
and any point $X$ in $\Sk$, we check 
that adding a notion of \emph{attribute-value pair} 
(also called \emph{key-value pair} or \emph{property}) on $X$ 
preserves the sketch-oriented nature of the paradigm.
For this purpose, the sketch $\Sk$ is extended with a point $A$ for attributes, 
a point $V$ for values, a point $T$ for item-attribute-value triples,  
and potential limits for asserting that $T$ is a relation on $X$, $A$ and $V$. 
This means that $\Sk$ is merged with the following sketch $\Sk_{\avp,X}$ 
on the point $X$. 

\footnotesize
$$ \Sk_{\avp,X} = 
\sketch{
\raisebox{4ex}{$
\xymatrix@C=1.5pc@R=.3pc{
&& T' \ar[lld]_{x'} \ar[ld]^{\!v'} \ar[rd]^(.6){a'} & \\ 
X & V & & A \\ 
&& T \ar[uu]_(.4){m_T} & \\ 
}$} 
\quad 
\begin{array}{l}
\mbox{with a product:} \\
\xymatrix@C=.7pc@R=1pc{
& T' \ar[dl]_{x'} \ar[d]^{\!v'} \ar[dr]^{a'} & \\ 
X & V & A \\ 
} 
\\
\end{array} 
\quad 
\begin{array}{l}
\mbox{and a mono:} \\ 
\xymatrix@C=1.5pc{
T\, \ar@{>->}[r]^{m_T} & T' \\ 
} 
\\
\end{array} 
\\ } 
$$ 
\normalsize

For instance, by merging the sketches $\Sk_\qu$, $\Sk_{\avp,N}$
and $\Sk_{\avp,E}$ on $N$, $E$, $V$ and $A$, we get a sketch for 
quivers with attribute-value pairs on nodes and edges. 
This is a step towards the construction of a sketch for property graphs
(Definition~\ref{def:property-graph}). 

\nota
We write $a\prttt v$ to denote an attribute-value pair. This notation is 
compatible with the fact that each attribute-value pair $a\prttt v$ on a node $n$ 
of a quiver can be replaced by a strongly labeled edge $\trp{n}{a}{v}$  
if values are seen as nodes. 
Indeed, by identifying $N$ and $V$ in $\Sk_{\avp,N}$ we get a sketch which is 
isomorphic to the sketch $\Sk_\slqu$ for strongly labeled quivers.
 
\begin{example}
\label{ex:structure-avp} 
This example illustrates the transformation of a quiver with attribute-value 
pairs on nodes (on the left) into a quiver with strongly labeled edges
(on the right): 

\scriptsize
$$ 
\raisebox{4ex}{
\begin{tikzpicture}[scale=0.8] 
    \node[MyNodeDouble] (mb) at (0,0) {
      MB
      \nodepart{two}
       {\small $ \begin{array}{l} 
        {\it name}\prt \mbox{{\it 'M.~Barr'}} \\ 
        {\it memberOf}\prt \mbox{{\it U. Columbia}} \\ 
        {\it memberOf}\prt \mbox{{\it U. Illinois}} \\ 
        {\it memberOf}\prt \mbox{{\it U. McGill}} \\ 
        \end{array} $ } };
\end{tikzpicture}
}
\qquad\qquad  
\begin{tikzpicture}[xscale=0.8,yscale=1.2] 
    \node[MyNode] (mb) at (0,0.9) {MB}; 
    \node[MyVal] (name) at (7,1.8) {\it 'M.~Barr'}; 
    \node[MyVal] (uc) at (7,1.2) {\it U. Columbia}; 
    \node[MyVal] (ui) at (7,0.6) {\it U. Illinois}; 
    \node[MyVal] (um) at (7,0) {\it U. McGill}; 
    \draw[edge] (mb) to (name);
    \node[MyAtt] at (3.8,1.45) {\!\it name\!};
    \draw[edge] (mb) to (uc);
    \node[MyAtt] at (3.7,1.1) {\!\it memberOf\!};
    \draw[edge] (mb) to (ui);
    \node[MyAtt] at (3.7,0.7) {\!\it memberOf\!};
    \draw[edge] (mb) to (um);
    \node[MyAtt] at (3.7,0.3) {\!\it memberOf\!};
    \draw[rounded corners=3mm] (-0.6,-0.3) rectangle (8.3,2.1); 
\end{tikzpicture}
$$
\normalsize
\end{example}

\ssubsec{Schema (unique typing)} 
\label{sssec:structure-schema}

In contrast to relational databases, 
schemas for graph databases are quite flexible: 
they are not compulsory, they may evolve, 
each node or edge may have several types, 
and there is a hierarchy on types. 
Starting from any kind of sketch-oriented database paradigm, we check 
that adding a notion of schema and typing preserves the sketch-oriented 
nature of the paradigm. Here we introduce a simple notion of schema, 
where each element has exactly one type, then 
(in Section~\ref{sssec:structure-schema})
a general notion, with a hierarchy on types.  

Let us start from a sketch $\Sk$ which specifies some untyped databases. 
In the simplest case, 
both an untyped database $G$ and a schema $\Sch$ are models 
of $\Sk$ and a typing is a morphism $\tau:G\to\Sch$ in $\Mod(\Sk)$. 
However, a schema may have a richer structure, hence the following definitions
for \emph{unique typing}. 
Given a pleomorphism $\sk:\Sk\to\sm{\Sk}$, 
a \emph{schema} for unique typing is a model $\Sch$ of $\sm{\Sk}$ 
and a \emph{uniquely typed database over $\Sch$} is made of a model $G$ of $\Sk$
and a morphism $\tau:G\to U_\sk(\Sch)$ in $\Mod(\Sk)$. 
Thus, with $\sk_n:\Sk\to\sm{\Sk}'$ denoting the numerator of $\sk$,  
the uniquely typed databases are the models of the sketch $\Tk$ 
made of a copy of $\Sk$, a copy of $\sm{\Sk}'$, 
an arrow $\tau_X:X\to \sk_n(X)$ for each point $X$ in $\Sk$ 
and an equation $\tau_Y\circ f \equiv \sk_n(f)\circ\tau_X$ for each arrow 
$f:X\to Y$ in $\Sk$. 

For instance, starting from quivers and deciding that a schema 
is a strongly labeled quiver, we define a \emph{uniquely typed quiver}  
as made of a strongly labeled quiver $\Sch$, a quiver $G$, and a 
morphism (called the \emph{typing morphism}) 
from $G$ to the quiver underlying $\Sch$. 
The \emph{node types} and \emph{edge types} are the nodes and 
edges of $\Sch$. 
Then one can define the \emph{label} of any edge in $G$ as the label 
of its type, this turns $G$ into a labeled quiver, which is not 
strongly labeled in general. 
Thus, uniquely typed quivers are models of the following sketch: 

\footnotesize
$$ \Tk_\qu = 
\sketch{
\raisebox{8ex}{$
\xymatrix@C=1.6pc@R=.3pc{
&& \sm{E'} \ar@/_/[lld]_{\sm{s}'} \ar@/^/[lld]^{\sm{t}'} 
  \ar[rd]^(.6){\sm{u}'} & \\ 
\sm{N} && & \sm{L} \\ 
&& \sm{E} \ar[uu]_(.4){m_{\sm{E}}} & \\ 
\\
N \ar[uuu]^{\tau_N} && E \ar@/_/[ll]_{s} \ar@/^/[ll]^{t} \ar[uu]_{\tau_E}  & \\ 
}
$} 
\quad 
\begin{array}{l} 
\mbox{with a product:} \\
\xymatrix@C=1pc@R=1.5pc{
& \sm{E}' \ar[dl]_{\sm{s}'} \ar[d]_{\sm{t}'} \ar[dr]^{\sm{u}'} & \\ 
\sm{N} & \sm{N} & \sm{L} \\ 
} \\
\end{array} 
\quad 
\begin{array}{l}
\mbox{and a mono:} \\ 
\xymatrix@C=1.5pc{
\sm{E}\, \ar@{>->}[r]^{m_{\sm{E}}} & \sm{E}' \\ 
} 
\vspace{1ex}  \\ 
\mbox{and equations: } \\
\tau_N\circ s \equiv \sm{s}' \circ m_{\sm{E}} \circ \tau_E \\
\tau_N\circ t \equiv \sm{t}' \circ m_{\sm{E}} \circ \tau_E \\
\end{array} 
\\ } 
$$ 
\normalsize

\begin{remark}
\label{rem:structure-label} 
This definition clarifies the similarities and differences between 
two notions, here called respectively \emph{edge types} and \emph{labels},  
which occur in various graph databases under various names.
To summarize: 
an edge type is identified by the triple made of its source and target
(which are node types) and its label, 
and the label of an edge is the label of its type. 
Note that typing can be applied to any kind of database paradigm $\Sk$ 
and any point $X$ in $\Sk$, 
whereas labels, as defined in this paper, are restricted to edges
in graph databases.
\end{remark}

\nota The fact that $\sm{x}$ is the type of $x$ is denoted $x\tyc\sm{x}$. 
In addition, each typed edge 
$\tr{(n_1\tyc \sm{n_1})}{[e\tyc\sm{e}]}{(n_2\tyc \sm{n_2})}$ 
may be denoted $\tr{(n_1\tyc \sm{n_1})}{[e\la\ell]}{(n_2\tyc \sm{n_2})}$,
where $\ell$ is the label of $e$, since $\sm{e}$ is recovered 
from this notation as the triple $\trp{\sm{n_1}}{\ell}{\sm{n_2}}$.  

\begin{example}
\label{ex:structure-schema}
The untyped database in Example~\ref{ex:structure-lq} is now augmented 
with types from the ER diagram in Example~\ref{ex:structure-erd}. 

\scriptsize
$$ 
\begin{tikzpicture}[yscale=0.8]  
    \node[MyNode] (mb) at (0,2) {MB$\tyc\Pe$};
    \node[MyNode] (cw) at (0,0) {CW$\tyc\Pe$};
    \node[MyNode] (sml) at (0,-0.8) {SML$\tyc\Pe$};
    \node[MyNode] (mht) at (4,2) {MHT$\tyc\Pa$};
    \node[MyNode] (ttt) at (4,1) {TTT$\tyc\Bo$};
    \node[MyNode] (ctcs) at (4,0) {CTCS$\tyc\Bo$};
    \node[MyNode] (cwm) at (4,-0.8) {CWM$\tyc\Bo$};
    \node[MyNode] (rtac) at (8,0.5) {RTAC$\tyc\Jo$};
    \draw[edge] (mb) to (mht);
    \node[MyEdge] at (2,2) {$\!\la\au\!$};
    \draw[edge] (mb) to (ttt);
    \node[MyEdge] at (2,1.5) {$\!\la\au\!$};
    \draw[edge] (mb) to (ctcs);
    \node[MyEdge] at (2,1) {$\!\la\au\!$};
    \draw[edge] (cw) to (ttt);
    \node[MyEdge] at (2,0.5) {$\!\la\au\!$};
    \draw[edge] (cw) to (ctcs);
    \node[MyEdge] at (2,0) {$\!\la\au\!$};
    \draw[edge][bend left=10] (sml) to (cwm);
    \node[MyEdge] at (2,-0.5) {$\!e_1\la\au\!$};
    \draw[edge][bend right=10] (sml) to (cwm);
    \node[MyEdge] at (2,-1.1) {$\!e_2\la\au\!$};
    \draw[edge] (ttt) to (rtac);
    \node[MyEdge] at (6,0.8) {$\!\la\rep\!$};
    \draw[edge] (ctcs) to (rtac);
    \node[MyEdge] at (6,0.2) {$\!\la\rep\!$};
    \draw[rounded corners=3mm] (-0.9,-1.4) rectangle (9.0,2.4); 
\end{tikzpicture}
$$
\normalsize
\end{example}

Unique typing is easily generalized from quivers to quivers
with attribute-value pairs on nodes and edges, 
where a schema is defined as a strongly labeled quiver with typed attributes 
$a\tyc \sm{a}$ on nodes and edges. 
It follows that uniquely typed quivers 
with attribute-value pairs on nodes and edges are models of a sketch.

\begin{example}
\label{ex:structure-sch-avp} 

Focusing on a single node, this example shows that the notion of
typing with the notation for attribute-value pairs on  nodes (on the
left) corresponds to that of strongly labeled edges (on the right).  The
schema is:

\scriptsize
$$ 
\raisebox{-0.5ex}{
\begin{tikzpicture}
    \node[MyNodeDouble] (mb) at (0,0) {
      $\Pe$
      \nodepart{two}
       {\small $ \begin{array}{l} 
        {\it name}\tyc \String \\ 
        \end{array} $ }}; 
\end{tikzpicture}
}
\qquad\qquad 
\begin{tikzpicture} 
    \node[MyNode] (mb) at (0,0) {$\Pe$}; 
    \node[MyVal] (name) at (2.8,0) {$\String$}; 
    \draw[edge] (mb) to (name);
    \node[MyAtt] at (1.2,0) {$\!\it name\!$};
    \draw[rounded corners=3mm] (-0.6,-0.3) rectangle (3.6,0.3); 
\end{tikzpicture}
$$
\normalsize

\noindent and the uniquely typed quiver with attribute-value pairs on nodes is:

\scriptsize
$$ 
\raisebox{-0.5ex}{
\begin{tikzpicture}
    \node[MyNodeDouble] (mb) at (0,0) {
      MB$\tyc \Pe$
      \nodepart{two}
       {\small $ \begin{array}{l} 
        {\it name}\prt \mbox{{\it 'M.~Barr'}$\tyc\String$} \\ 
        \end{array} $ } };
\end{tikzpicture}
}
\qquad\qquad 
\begin{tikzpicture} 
    \node[MyNode] (mb) at (0,0) {MB$\tyc \Pe$}; 
    \node[MyVal] (name) at (3.2,0) {{\it 'M.~Barr'}$\tyc\String$}; 
    \draw[edge] (mb) to (name);
    \node[MyAtt] at (1.2,0) {\!\it name\!};
    \draw[rounded corners=3mm] (-0.8,-0.3) rectangle (4.6,0.3); 
\end{tikzpicture}
$$
\normalsize
\end{example}

\ssubsec{Property graphs} 
\label{sssec:structure-property}

More generally, given a pleomorphism $\sk:\Sk\to\sm{\Sk}$ 
with a transitive relation on $\sm{\Sk}$ called \emph{subtyping}, 
a \emph{schema} is a model $\Sch$ of $\sm{\Sk}$ 
and a \emph{typed database over $\Sch$} is made of a model $G$ of $\Sk$
and a binary relation (called the \emph{typing relation}) between $G$ and 
$U_\sk(\Sch)$ in $\Mod(\Sk)$, satisfying the \emph{subtyping axiom}: 
if $\sm{x}$ is both a type of $x$ and a subtype of $\sm{x}'$, 
then $\sm{x}'$ is a type of $x$. 
Then the typed databases are the models of the sketch $\Tk$ 
made of a copy of $\Sk$, a copy of $\sm{\Sk}'$ (where $\sk_n:\Sk\to\sm{\Sk}'$ 
is the numerator of $\sk$), 
a relation $\tau$ (the typing relation) between the two,  
and the subtyping axiom. 

\nota The fact that $\sm{x}$ is a type of $x$ is denoted $x\tyc\sm{x}$ 
and the fact that $\sm{x}$ is a subtype of $\sm{x}'$ is denoted 
$\sm{x}\subtype\sm{x}'$. 
Thus, the subtyping axiom states that  
if $x\tyc\sm{x}$ and $\sm{x}\subtype\sm{x}'$ then $x\tyc\sm{x}'$. 

For instance, starting from quivers, this provides  
the sketch-oriented database paradigm of \emph{typed quivers}, 
which can be extended with attribute-value pairs.  
Thus, we propose the following definition of property graphs 
as a sketch-oriented database paradigm.   

\begin{definition}
\label{def:property-graph} 
A \emph{sketch-oriented property graph} is a typed quiver with 
attribute-value pairs on nodes and edges.  
\end{definition}

\section{Sketch-oriented inference}
\label{sec:inference} 

A major interest of all kinds of graph databases is that \emph{paths} 
can be built by concatenating consecutive edges.
Adding \emph{all} paths to a graph often results in a very large graph, 
even infinite, although usually only \emph{some} paths are required.
For instance, the sketch $\Sk_\qu$ for quivers 
can be extended so as to include 
either \emph{total} concatenation for building all paths 
or \emph{partial} concatenation for building some paths, 
and these two sketches are related by a pleomorphism 
(details are given in Section~\ref{ssec:inference-paths}). 
The free functor associated to $\sk_\cqu$ constrains concatenation 
to become total, it can be used for generating paths ``on demand''.  
This is an instance of a general process: 
a \emph{localizer}, as defined below, is a pleomorphism $\sk:\Rk\to\Sk$
which determines a \emph{sketch-oriented inference process}. 
In Section~\ref{ssec:inference-sketches} some categorical tools are 
reminded and localizers are defined, 
their application to inference systems for databases is described in 
Section~\ref{ssec:inference-databases}, and it is applied 
to the construction of paths in Section~\ref{ssec:inference-paths}. 

\subsec{Localizers of sketches}
\label{ssec:inference-sketches}

In category theory, a localization is a functor which adds inverses for 
some arrows. 
Similarly, we say that a morphism of sketches is a localization 
when it adds inverses for some arrows. 
Then we define a \emph{localizer} of sketches as a ``localization up to 
equivalence'', precisely as a pleomorphism where the numerator 
is composed of localizations and equivalence morphisms.  
It follows that the composition of localizers is a localizer. 
Let $\sk:\Rk\to\Sk$ be a localizer and $F_\sk\dashv U_\sk$ 
the associated adjunction, then the free functor $F_\sk$ is a localization. 
Thus, according to a major property of localization in adjunction, 
the underlying functor $U_\sk$ is full and faithful and 
$F_\sk\circ U_\sk$ is an isomorphism. 

For instance, let us check that a morphism $\sk:\Rk\to\Sk$ which takes a cone 
(commutative or not) in $\Rk$ to a limit in $\Sk$ is a localizer. 
First, if $\sk$ adds an equation $f\equiv g$ in $\Sk$ for some parallel arrows 
$f$ and $g$ in $\Rk$, it is a localizer since it makes the equalizer arrow
of $f$ and $g$ invertible. 
Thus, if $\sk$ takes any finite cone of $\Rk$ to a commutative cone, 
it is a localizer. 
Then, if $\sk$ takes a commutative cone of $\Rk$ to a limit, 
it is a localizer since it makes the universal arrow from the vertex of the 
given cone to the vertex of the constructed limit invertible. 

For any sketch $\Sk$, the inclusion of its underlying quiver $\Qu$ 
in $\Sk$ is a localizer. 
Thus, computing the colimit of a diagram $D$ in $\Mod(\Sk)$ 
can be performed in two steps: first compute the colimit of $D$ in $\Mod(\Qu)$, 
which is pointwise, then apply the free functor associated to the inclusion 
of $\Qu$ in $\Sk$. 

The fact that a localizer of sketches $\sk:\Rk\to\Sk$ essentially 
takes some arrows of $\Rk$ to invertible arrows of $\Sk$ 
can be expressed in terms of models of $\Rk$, 
thanks to a ``Yoneda correspondence''. 
Given a sketch $\Rk$, the \emph{contravariant Yoneda model} of $\Rk$ 
is the contravariant model $\Yon_\Rk$ of $\Rk$ in $\Mod(\Rk)$ 
such that $\Yon_\Rk(X)=\Hom_{\Th(\Rk)}(X,-)$ for each point $X$ of $\Rk$ 
and $\Yon_\Rk(f)=-\circ f$ for each arrow $f$ of $\Rk$. 
Then for every pleomorphism $\sk:\Rk\to\Sk$, the free functor $F_\sk$ is such 
that $F_\sk\circ\Yon_\Rk$ and $\Yon_\Sk\circ\sk$ are equivalent. 
%
The \emph{Yoneda Lemma} asserts that 
for each point $X$ of $\Rk$ and model $M$ of $\Rk$ there is a bijection, 
natural in $X$ and $M$, between $M(X)$ and $\Hom_{\Mod(\Rk)}(\Yon_\Rk(X),M)$. 
The morphisms of models from $\Yon_\Rk(X)$ to $M$   
are called the \emph{occurrences} of $X$ in $M$, so that the 
Yoneda Lemma establishes a bijection between the elements of 
$M$ of sort $X$ and the occurrences of $X$ in $M$. 

\subsec{Sketch-oriented inference systems}  
\label{ssec:inference-databases}

We define a \emph{sketch-oriented inference system} 
as a localizer $\sk:\Rk\to\Sk$, and 
a \emph{sketch-oriented inference rule} with respect to $\sk$ 
(with numerator $\sk_n$) as an arrow $r:\Conc\to\Prem$ in $\Rk$ 
such that $\sk_n(r)$ is invertible. 
Then $\Prem$ and $\Conc$ are called 
respectively the \emph{premisse} and the \emph{conclusion} of the rule $r$.  
A model $M$ of $\Rk$ is a \emph{presentation} of the model $F_{\sk}(M)$ 
of $\Sk$. The aim of the inference process 
is to modify presentations without modifying the presented model. 
This process is described below as a colimit in the category $\Mod(\Rk)$. 

Given a rule $r:\Conc\to\Prem$ with respect to an inference system 
$\sk:\Rk\to\Sk$, its image by the Yoneda contravariant model 
is a morphism $\Yon_\Rk(r):\Yon_\Rk(\Prem)\to\Yon_\Rk(\Conc)$ which we call 
the \emph{semantic rule} associated to $r$. 
Then, given a model $M$ of $\Rk$ and an occurrence $p$ of $\Prem$ in $M$, 
the \emph{application} of the rule $r$ to the occurrence $p$  
consists in building the following pushout in $\Mod(\Rk)$
(where $\Yon=\Yon_\Rk$): 

$$ \xymatrix@C=4pc@R=1.5pc{
\Yon(\Prem) \ar[r]^{\Yon(r)} \ar[d]_{p} & 
  \Yon(\Conc) \ar[d]^{q} \\
M \ar[r]^{m} & M_1 \\ 
} $$

Then it can be checked that $M_1$, like $M$, is a presentation of $F_{\sk}(M)$.
First, the morphism $F_{\sk}(\Yon(r))$ is an isomorphism 
since it is equal to $\Yon_\Sk(F_{\sk}(r))$ and $F_{\sk}(r)$ 
is an isomorphism.
It follows that $F_{\sk}(m)$ is also an isomorphism, 
since the image of the pushout by the left adjoint functor $F_{\sk}$ 
is a pushout in $\Mod(\Sk)$. 

The correspondence between the sketch-oriented inference rules and 
the ``usual'' logical rules runs as follows. 
The models of $\Rk$ are the formulas of the logic, 
and the sketch-oriented rule $r$ corresponds to the logical rule 
$\frac{\Yon(\Prem)}{\Yon(\Conc)}$. 
Now, assume that there is a ``good'' notion of union in $\Mod(\Rk)$
(this issue will be studied in Section~\ref{sec:stutter}). 
Given a logical rule $\frac{\YPrem_1\dots\YPrem_n}{\YConc_0}$ 
(with $n\geq0$) let $\YPrem = \YPrem_1\cup\dots\cup\YPrem_n$
and $\YConc=\YPrem\cup\YConc_0$ with the inclusion $\rho:\YPrem\to\YConc$.
Then the corresponding sketch-oriented rule is the arrow $r$ 
in the theory of $\Rk$ such that $\rho=\Yon(r)$. 

For instance, a major application of sketch-oriented inference systems for 
graph databases is the generation of paths, 
see Section~\ref{ssec:inference-paths}. 
Another important application for typed databases with a hierarchy of types 
is the typing process. 

\subsec{Paths}
\label{ssec:inference-paths}

Paths are specific to \emph{graph} databases, they are basic components 
of many algorithms, for instance for pathfinding.  
In this Section we describe the generation of paths on quivers
from a sketch-oriented point of view. This is easily extended to labeled 
quivers, but the extension to strongly labeled quivers requires some care.
This is important since strongly labeled quivers form the structure of 
both RDF graphs (Section~\ref{sssec:structure-rdf}) 
and ER diagrams (Section~\ref{sssec:structure-erd}). 

\para{Paths on quivers}

In the inference system $\sk_\cqu : \Rk_\cqu \to \Sk_\cqu$, 
the sketch $\Rk_\cqu$ is:

\footnotesize 
$$ \Rk_\cqu = 
\sketch{
\raisebox{4ex}{$
\xymatrix@C=3pc@R=1.3pc{
N & 
E \ar@/_/[l]_{s} \ar@/^/[l]^{t} & 
K \ar@/_/[l]_{p_1} \ar@/^/[l]^{p_2} \\ 
N' \ar@/_/[ru]_(.4){\empt} \ar[u]^{m_N} 
&& 
K' \ar@/^/[lu]^(.4){\concat} \ar[u]_{m_K} \\ 
}  $} 
\begin{array}{l}
\mbox{with pullback:} \\ 
\xymatrix@C=.5pc@R=.3pc{
& K \ar[dl]_{p_1} \ar[dr]^{p_2} & \\ 
E \ar[dr]_{t} & & E \ar[dl]^{s} \\ 
& N & 
}  
\end{array}
\begin{array}{l}
\mbox{and monos:} \\ 
\xymatrix@C=1.5pc@R=0pc{
N'\, \ar@{>->}[r]^{m_N} & N \\ 
K'\, \ar@{>->}[r]^{m_K} & K \\ 
} 
\end{array}
\begin{array}{l}
\mbox{and equations:} \\ 
s\scirc\concat \equiv s\scirc p_1 \scirc m_K \\
t\scirc\concat \equiv t\scirc p_2 \scirc m_K \\
s\scirc\empt \equiv m_N \\
t\scirc\empt \equiv m_N \\
\end{array}
\\ } 
$$ 
\normalsize 

By identifying $N'$ and $N$, as well as $K'$ and $K$, taking 
$m_N$ and $m_K$ to $\id_N$ and $\id_K$ respectively, 
and adding equations for ensuring that $\concat$ is associative
with $\empt$ as unit, we get a sketch $\Sk_\cqu$ for quivers with all paths  
(which means, a sketch for small categories) together with 
a pleomorphism $\sk_\cqu:\Rk_\cqu\to\Sk_\cqu$ which is a localizer.

\nota 
For short, concatenation is also denoted 
$e_1 \scric e_2 = \concat(e_1,e_2)$. 

\begin{example}
\label{ex:inference-cq}
The semantic rule associated to the rule $m_K:K'\to K$ is: 

\scriptsize
$$ 
\begin{tikzpicture}[yscale=0.9,xscale=1]  
    \node[MyNode] (n1) at (0,0) {$x_1$};
    \node[MyNode] (n2) at (2,0) {$x_2$};
    \node[MyNode] (n3) at (4,0) {$x_3$};
    \draw[edge] (n1) to (n2);
    \node[MyEdge] at (1,0) {$\!y_1\!$};
    \draw[edge] (n2) to (n3);
    \node[MyEdge] at (3,0) {$\!y_2\!$};
    \draw[rounded corners=3mm] (-0.4,-0.3) rectangle (4.4,0.8); 
\end{tikzpicture}
\raisebox{4ex}{$\;\to\;$} 
\begin{tikzpicture}[yscale=1,xscale=1.3] 
    \node[MyNode] (n1) at (0,0) {$x_1$};
    \node[MyNode] (n2) at (2,0) {$x_2$};
    \node[MyNode] (n3) at (4,0) {$x_3$};
    \draw[edge] (n1) to (n2);
    \node[MyEdge] at (1,0) {$\!y_1\!$};
    \draw[edge] (n2) to (n3);
    \node[MyEdge] at (3,0) {$\!y_2\!$};
    \draw[edge][bend left=20] (n1) to (n3);
    \node[MyEdge] at (2,0.5) {$\!y_1\scric y_2\!$};
    \draw[rounded corners=3mm] (-0.4,-0.3) rectangle (4.4,0.8); 
\end{tikzpicture}
$$
\normalsize
There are two distinct occurrences of the premisse of this rule in 
the quiver below on the left. The application of the rule 
successively to both occurrences returns the quiver on the right: 

\scriptsize
$$ 
\begin{tikzpicture}[yscale=0.9,xscale=1]  
    \node[MyNode] (mb) at (0,1) {MB};
    \node[MyNode] (ttt) at (2,2) {TTT};
    \node[MyNode] (ctcs) at (2,0) {CTCS};
    \node[MyNode] (rtac) at (4,1) {RTAC};
    \draw[edge] (mb) to (ttt);
    \node[MyEdge] at (1,1.5) {$\!e_1\!$};
    \draw[edge] (mb) to (ctcs);
    \node[MyEdge] at (1,0.5) {$\!e_3\!$};
    \draw[edge] (ttt) to (rtac);
    \node[MyEdge] at (3,1.5) {$\!e_2\!$};
    \draw[edge] (ctcs) to (rtac);
    \node[MyEdge] at (3,0.5) {$\!e_4\!$};
    \draw[rounded corners=3mm] (-0.5,-0.3) rectangle (4.7,2.5); 
\end{tikzpicture}
\raisebox{10ex}{$\;\to\;$} 
\begin{tikzpicture}[yscale=1,xscale=1.3] 
    \node[MyNode] (mb) at (0,1) {MB};
    \node[MyNode] (ttt) at (2,2) {TTT};
    \node[MyNode] (ctcs) at (2,0) {CTCS};
    \node[MyNode] (rtac) at (4,1) {RTAC};
    \draw[edge] (mb) to (ttt);
    \node[MyEdge] at (1,1.6) {$\!e_1\!$};
    \draw[edge] (mb) to (ctcs);
    \node[MyEdge] at (1,0.4) {$\!e_3\!$};
    \draw[edge] (ttt) to (rtac);
    \node[MyEdge] at (3,1.6) {$\!e_2\!$};
    \draw[edge] (ctcs) to (rtac);
    \node[MyEdge] at (3,0.4) {$\!e_4\!$};
    \draw[edge][bend left=15] (mb) to (rtac);
    \node[MyEdge] at (2,1.25) {$\!e_1\scric e_2\!$};
    \draw[edge][bend right=15]  (mb) to (rtac);
    \node[MyEdge] at (2,0.75) {$\!e_3 \scric e_4\!$};
    \draw[rounded corners=3mm] (-0.3,-0.3) rectangle (4.5,2.3); 
\end{tikzpicture}
$$
\normalsize
\end{example}

\para{Paths on labeled quivers}  

Paths on quivers are easily generalized to paths on 
labeled quivers, assuming that all labels may be concatenated.  
Thus, the localizer $\sk_\cqu$ is extended as a localizer 
$\sk_\clqu:\Rk_\clqu\to\Sk_\clqu$ 
with compatible concatenation on edges and on labels. 

\nota 
The notation for concatenation of edges is easily extended to labeled edges: 
$e_1 \scric e_2\la \ell_1\cric\ell_2
= \concat(e_1\la \ell_1,e_2\la \ell_2)$. 

\begin{example}
\label{ex:inference-clq} 
Example~\ref{ex:inference-cq} is now extended to labeled quivers.  
The semantic rule associated to $m_K:K'\to K$ is: 

\scriptsize
$$ 
\begin{tikzpicture}[yscale=0.9,xscale=1]  
    \node[MyNode] (n1) at (0,0) {$x_1$};
    \node[MyNode] (n2) at (2,0) {$x_2$};
    \node[MyNode] (n3) at (4,0) {$x_3$};
    \draw[edge] (n1) to (n2);
    \node[MyEdge] at (1,0) {$\!y_1\la z_1\!$};
    \draw[edge] (n2) to (n3);
    \node[MyEdge] at (3,0) {$\!y_2\la z_2\!$};
    \draw[rounded corners=3mm] (-0.4,-0.3) rectangle (4.4,0.8); 
\end{tikzpicture}
\raisebox{4ex}{$\;\to\;$} 
\begin{tikzpicture}[yscale=1,xscale=1.3] 
    \node[MyNode] (n1) at (0,0) {$x_1$};
    \node[MyNode] (n2) at (2,0) {$x_2$};
    \node[MyNode] (n3) at (4,0) {$x_3$};
    \draw[edge] (n1) to (n2);
    \node[MyEdge] at (1,0) {$\!y_1 \la z_1\!$};
    \draw[edge] (n2) to (n3);
    \node[MyEdge] at (3,0) {$\!y_2 \la z_2\!$};
    \draw[edge][bend left=20] (n1) to (n3);
    \node[MyEdge] at (2,0.5) {$\!y_1\scric y_2 \la z_1\scric z_2\!$};
    \draw[rounded corners=3mm] (-0.4,-0.3) rectangle (4.4,0.8); 
\end{tikzpicture}
$$
\normalsize
When this rule is applied to both occurrences of its premisse we get: 

\scriptsize
$$ 
\begin{tikzpicture}[yscale=0.9,xscale=1]  
    \node[MyNode] (mb) at (0,1) {MB};
    \node[MyNode] (ttt) at (2,2) {TTT};
    \node[MyNode] (ctcs) at (2,0) {CTCS};
    \node[MyNode] (rtac) at (4,1) {RTAC};
    \draw[edge] (mb) to (ttt);
    \node[MyEdge] at (1,1.5) {$\!e_1\la\au\!$};
    \draw[edge] (mb) to (ctcs);
    \node[MyEdge] at (1,0.5) {$\!e_3\la\au\!$};
    \draw[edge] (ttt) to (rtac);
    \node[MyEdge] at (3,1.5) {$\!e_2\la\rep\!$};
    \draw[edge] (ctcs) to (rtac);
    \node[MyEdge] at (3,0.5) {$\!e_4\la\rep\!$};
    \draw[rounded corners=3mm] (-0.5,-0.3) rectangle (4.7,2.5); 
\end{tikzpicture}
\raisebox{10ex}{$\;\to\;$} 
\begin{tikzpicture}[yscale=1,xscale=1.3] 
    \node[MyNode] (mb) at (0,1) {MB};
    \node[MyNode] (ttt) at (2,2) {TTT};
    \node[MyNode] (ctcs) at (2,0) {CTCS};
    \node[MyNode] (rtac) at (4,1) {RTAC};
    \draw[edge] (mb) to (ttt);
    \node[MyEdge] at (1,1.6) {$\!e_1\la\au\!$};
    \draw[edge] (mb) to (ctcs);
    \node[MyEdge] at (1,0.4) {$\!e_3\la\au\!$};
    \draw[edge] (ttt) to (rtac);
    \node[MyEdge] at (3,1.6) {$\!e_2\la\rep\!$};
    \draw[edge] (ctcs) to (rtac);
    \node[MyEdge] at (3,0.4) {$\!e_4\la\rep\!$};
    \draw[edge][bend left=15] (mb) to (rtac);
    \node[MyEdge] at (2,1.25) {$\!e_1\scric e_2 \la \au\scric\rep\!$};
    \draw[edge][bend right=15]  (mb) to (rtac);
    \node[MyEdge] at (2,0.75) {$\!e_3 \scric e_4 \la \au\scric\rep\!$};
    \draw[rounded corners=3mm] (-0.3,-0.3) rectangle (4.5,2.3); 
\end{tikzpicture}
$$
\normalsize
\end{example}

\para{Paths on strongly labeled quivers}

There are two different ways to define concatenation of edges in a 
strongly labeled quiver, corresponding to two different intended meanings  
of paths. 
In the first definition, concatenation of labels is a binary operation 
on labels, as for labeled quivers, with notation $ \ell_1\cric\ell_2
= \concat(\ell_1,\ell_2)$. 
By associativity of concatenation we get paths of the form 
$\trp{n_1}{\ell_1\cric\ell_2\cric\dots\cric\ell_k}{n_{k\!+\!1}}$. 
In the second definition, concatenation of labels takes two labels 
$\ell_1$, $\ell_2$ and a node $n$ to a label denoted $\ell_1 (n) \ell_2$. 
By associativity of concatenation we get paths of the form 
$\trp{n_1}{\ell_1(n_2)\ell_2\dots(n_k)\ell_k}{n_{k\!+\!1}}$. 

\begin{example}
\label{ex:inference-cslq} 
With the first definition of concatenation of labels, here are the semantic rule 
and its application successively to both occurrences of the premisse of 
the rule, which both yield the same result. 

\scriptsize
$$ 
\begin{tikzpicture}[yscale=1,xscale=1.3] 
    \node[MyNode] (n1) at (0,0) {$x_1$};
    \node[MyNode] (n2) at (1.5,0) {$x_2$};
    \node[MyNode] (n3) at (3,0) {$x_3$};
    \draw[edge] (n1) to (n2);
    \node[MyAtt] at (0.75,0) {$\!z_1\!$};
    \draw[edge] (n2) to (n3);
    \node[MyAtt] at (2.25,0) {$\!z_2\!$};
    \draw[rounded corners=3mm] (-0.4,-0.3) rectangle (3.4,0.6); 
\end{tikzpicture}
\raisebox{4ex}{$\;\to\;$} 
\begin{tikzpicture}[yscale=1,xscale=1.3] 
    \node[MyNode] (n1) at (0,0) {$x_1$};
    \node[MyNode] (n2) at (1.5,0) {$x_2$};
    \node[MyNode] (n3) at (3,0) {$x_3$};
    \draw[edge] (n1) to (n2);
    \node[MyAtt] at (0.75,0) {$\!z_1\!$};
    \draw[edge] (n2) to (n3);
    \node[MyAtt] at (2.25,0) {$\!z_2\!$};
    \draw[edge][bend left=25] (n1) to (n3);
    \node[MyAtt] at (1.5,0.4) {$\!z_1\scric z_2\!$};
    \draw[rounded corners=3mm] (-0.4,-0.3) rectangle (3.4,0.6); 
\end{tikzpicture}
$$
\normalsize

\scriptsize
$$ 
\begin{tikzpicture}[yscale=0.5,xscale=0.9]  
    \node[MyNode] (mb) at (0,1) {MB};
    \node[MyNode] (ttt) at (2,2) {TTT};
    \node[MyNode] (ctcs) at (2,0) {CTCS};
    \node[MyNode] (rtac) at (4,1) {RTAC};
    \draw[edge] (mb) to (ttt);
    \node[MyAtt] at (1,1.5) {$\!\au\!$};
    \draw[edge] (mb) to (ctcs);
    \node[MyAtt] at (1,0.5) {$\!\au\!$};
    \draw[edge] (ttt) to (rtac);
    \node[MyAtt] at (3,1.5) {$\!\rep\!$};
    \draw[edge] (ctcs) to (rtac);
    \node[MyAtt] at (3,0.5) {$\!\rep\!$};
    \draw[rounded corners=3mm] (-0.6,-0.5) rectangle (4.8,2.5); 
\end{tikzpicture}
\raisebox{10ex}{$\;\to\;$} 
\begin{tikzpicture}[yscale=0.5,xscale=0.9] 
    \node[MyNode] (mb) at (0,1) {MB};
    \node[MyNode] (ttt) at (2,2) {TTT};
    \node[MyNode] (ctcs) at (2,0) {CTCS};
    \node[MyNode] (rtac) at (4,1) {RTAC};
    \draw[edge] (mb) to (ttt);
    \node[MyAtt] at (1,1.5) {$\!\au\!$};
    \draw[edge] (mb) to (ctcs);
    \node[MyAtt] at (1,0.5) {$\!\au\!$};
    \draw[edge] (ttt) to (rtac);
    \node[MyAtt] at (3,1.5) {$\!\rep\!$};
    \draw[edge] (ctcs) to (rtac);
    \node[MyAtt] at (3,0.5) {$\!\rep\!$};
    \draw[edge] (mb) to (rtac);
    \node[MyAtt] at (2,1) {$\!\au\scric\rep\!$};
    \draw[rounded corners=3mm] (-0.6,-0.5) rectangle (4.8,2.5); 
\end{tikzpicture}
$$
\normalsize

\noindent With the second definition of concatenation of labels, 
here are the semantic rule and its application successively to both 
occurrences of the premisse, 
which yield two paths. 

\scriptsize
$$ 
\begin{tikzpicture}[yscale=1,xscale=1.3] 
    \node[MyNode] (n1) at (0,0) {$x_1$};
    \node[MyNode] (n2) at (1.5,0) {$x_2$};
    \node[MyNode] (n3) at (3,0) {$x_3$};
    \draw[edge] (n1) to (n2);
    \node[MyAtt] at (0.75,0) {$\!z_1\!$};
    \draw[edge] (n2) to (n3);
    \node[MyAtt] at (2.25,0) {$\!z_2\!$};
    \draw[rounded corners=3mm] (-0.4,-0.3) rectangle (3.4,0.6); 
\end{tikzpicture}
\raisebox{4ex}{$\;\to\;$} 
\begin{tikzpicture}[yscale=1,xscale=1.3] 
    \node[MyNode] (n1) at (0,0) {$x_1$};
    \node[MyNode] (n2) at (1.5,0) {$x_2$};
    \node[MyNode] (n3) at (3,0) {$x_3$};
    \draw[edge] (n1) to (n2);
    \node[MyAtt] at (0.75,0) {$\!z_1\!$};
    \draw[edge] (n2) to (n3);
    \node[MyAtt] at (2.25,0) {$\!z_2\!$};
    \draw[edge][bend left=25] (n1) to (n3);
    \node[MyAtt] at (1.5,0.4) {$\!z_1(x_2)z_2\!$};
    \draw[rounded corners=3mm] (-0.4,-0.3) rectangle (3.4,0.6); 
\end{tikzpicture}
$$
\normalsize

\scriptsize
$$ 
\begin{tikzpicture}[xscale=0.9,yscale=0.7]  
    \node[MyNode] (mb) at (0,1) {MB};
    \node[MyNode] (ttt) at (2,2) {TTT};
    \node[MyNode] (ctcs) at (2,0) {CTCS};
    \node[MyNode] (rtac) at (4,1) {RTAC};
    \draw[edge] (mb) to (ttt);
    \node[MyAtt] at (1,1.5) {$\!\au\!$};
    \draw[edge] (mb) to (ctcs);
    \node[MyAtt] at (1,0.5) {$\!\au\!$};
    \draw[edge] (ttt) to (rtac);
    \node[MyAtt] at (3,1.5) {$\!\rep\!$};
    \draw[edge] (ctcs) to (rtac);
    \node[MyAtt] at (3,0.5) {$\!\rep\!$};
    \draw[rounded corners=3mm] (-0.6,-0.4) rectangle (4.8,2.4); 
\end{tikzpicture}
\raisebox{15ex}{$\;\;\longrightarrow\;\;$} 
\begin{tikzpicture}[xscale=1.2,yscale=0.7] 
    \node[MyNode] (mb) at (0,1) {MB};
    \node[MyNode] (ttt) at (2,2) {TTT};
    \node[MyNode] (ctcs) at (2,0) {CTCS};
    \node[MyNode] (rtac) at (4,1) {RTAC};
    \draw[edge] (mb) to (ttt);
    \node[MyAtt] at (1,1.5) {$\!\au\!$};
    \draw[edge] (mb) to (ctcs);
    \node[MyAtt] at (1,0.5) {$\!\au\!$};
    \draw[edge] (ttt) to (rtac);
    \node[MyAtt] at (3,1.5) {$\!\rep\!$};
    \draw[edge] (ctcs) to (rtac);
    \node[MyAtt] at (3,0.45) {$\!\rep\!$};
    \draw[edge][bend left=15] (mb) to (rtac);
    \node[MyAtt] at (2,1.25) {$\!\au({\rm TTT})\rep\!$};
    \draw[edge][bend right=15]  (mb) to (rtac);
    \node[MyAtt] at (2,0.75) {$\!\au({\rm CTCS})\rep\!$};
    \draw[rounded corners=3mm] (-0.6,-0.4) rectangle (4.8,2.4); 
\end{tikzpicture}
$$
\normalsize

\end{example}

\para{Paths on property graphs} 

In Section~\ref{sssec:structure-schema} we have defined a sketch-oriented 
property graph as a typed quiver with attribute-value pairs on nodes and edges.  
Then types form a strongly labeled quiver (the schema), 
so that we have to choose one of both definitions above 
for the concatenation of labels. We propose to choose the second definition, 
which seems to be the most usual one, so that paths have the form 
$(n_1)\ell_1(n_2)\ell_2\dots(n_k)\ell_k(n_{k\!+\!1})$.

\section{Stuttering sketches} 
\label{sec:stutter}

The notion of \emph{relation over a diagram}, as defined in 
Section~\ref{ssec:structure-sketches}, is used in many sketches related to 
databases in Sections~\ref{sec:structure} and~\ref{sec:inference}. 
It is built from two nested limits:
a limit over a diagram followed by a monomorphism. 
In set theory, a relation is defined as a subset of a cartesian product, 
and it is well-known that the cartesian product may be dropped by 
defining a relation as a \emph{jointly monic} family of functions. 
This Section is devoted to a similar issue: we define \emph{stuttering
sketches} as finite-limit sketches where relations over a diagram $D$ 
are specified by only one limit cone, thus avoiding the use of the 
limit of $D$. A remarkable property is that 
the union of models of a stuttering sketch is a pointwise colimit. 
As far as we know, this provides a new notion and a new result 
in sketch theory. 
In Section~\ref{ssec:stutter-colimits} we define pointwise limits
and colimits of models of a sketch. In this paper all sketches 
are finite-limit sketches, so that all limits of models are pointwise. 
But it is not the case for colimits, and we check that the fact of 
being pointwise for a colimit is not stable under equivalence of sketches. 
In Section~\ref{ssec:stutter-sketches} we define \emph{stuttering sketches}   
as specific finite-limit sketches where 
relations are specified by a unique limit instead of two nested limits 
(Theorem~\ref{theo:stutter}). 
Then we prove that finite unions of models 
of a stuttering sketch are pointwise colimits (Theorem~\ref{theo:union}). 

\subsec{Pointwise colimits of models} 
\label{ssec:stutter-colimits}

Pointwise limits and colimits in the category of models of a sketch 
are defined as follows. 
Let $\Sk_\obj$ be the sketch made of a unique point $X_\obj$ and no arrow,
so that $\Mod(\Sk_\obj)$ is the category of sets. 
For every sketch $\Sk$ and every point $X$ in $\Sk$, 
let $\sk_X : \Sk_\obj \to \Sk$ be the morphism which takes $X_\obj$ to $X$. 
Then a limit or colimit of models of $\Sk$ is \emph{pointwise} 
if it is preserved by the underlying functor associated to $\sk_X$ 
for each point $X$ of $\Sk$. 
Note that this property depends on the points of $\Sk$, 
so that it is not stable under equivalence of sketches:
see Examples~\ref{ex:stutter-not-pointwise} and~\ref{ex:stutter-pointwise}. 

Limits and colimits in the category of models of a sketch can be used 
for defining analogs of the set-theoretic notions of 
inclusion, intersection and union.
The \emph{inclusion} of models of $\Sk$ is defined pointwise: 
a model $M'$ is included in a model $M$, denoted $M'\subseteq M$, if 
there is a morphism $m:M'\to M$ such that the function $m_X:M'(X)\to M(X)$ 
is an inclusion for every point $X$ of $\Sk$.
Then $m$ is a monomorphism in $\Mod(\Sk)$.
Given some models of $\Sk$, we say that they are \emph{compatible} 
when they are included in a common model $M_\ast$ of $\Sk$. 
Let $M_1$ and $M_2$ be two compatible models of $\Sk$. 
Then the \emph{intersection} $M_1\cap M_2$ is 
the vertex of the pullback of the inclusions 
$M_1\subseteq M_\ast \supseteq M_2$. 
Since it is a limit, it is pointwise.  
Then the \emph{union} $M_1\cup M_2$ is defined
as the pushout of the inclusions $M_1\supseteq M_1\cap M_2 \subseteq M_2$ 
in 
$\Mod(\Sk)$. The pushout exists but it is not pointwise in general.

\begin{example}
\label{ex:stutter-not-pointwise} 
We consider the inference system $\sk_\cqu : \Rk_\cqu \to \Sk_\cqu$ 
for partial-to-total concatenation, as in Section~\ref{ssec:inference-paths}. 
The following commutative squares illustrate the computation of unions,  
first in $\Mod(\Sk_\cqu)$ (on the left), then in $\Mod(\Rk_\cqu)$
(on the right). For readability, empty paths are omitted. 

\scriptsize
$$ 
\begin{array}{ccc}
\begin{tikzpicture}
    \node[MyNode] (n2) at (1.5,0) {$n_2$};
    \draw[rounded corners=3mm] (1.1,-0.3) rectangle (1.9,0.3); 
\end{tikzpicture}
& \raisebox{2ex}{$\to$} & 
    \begin{tikzpicture}
    \node[MyNode] (n1) at (0,0) {$n_1$};
    \node[MyNode] (n2) at (1.5,0) {$n_2$};
    \draw[edge] (n1) to (n2);
    \node[MyEdge] at (0.75,0) {$\!e_1\!$};
    \draw[rounded corners=3mm] (-0.4,-0.3) rectangle (1.9,0.3); 
\end{tikzpicture} \\
\downarrow & & \downarrow \\ 
\begin{tikzpicture}
    \node[MyNode] (n2) at (1.5,0) {$n_2$};
    \node[MyNode] (n3) at (2.7,0) {$n_3$};
    \draw[edge] (n2) to (n3);
    \node[MyEdge] at (2.05,0) {$\!e_2\!$};
    \draw[rounded corners=3mm] (1.1,-0.3) rectangle (3.1,0.3); 
    \node (void) at (2,-0.6) {};
\end{tikzpicture} 
& \raisebox{6ex}{$\to$} & 
\begin{tikzpicture}
    \node[MyNode] (n1) at (0,0) {$n_1$};
    \node[MyNode] (n2) at (1.2,0) {$n_2$};
    \node[MyNode] (n3) at (2.4,0) {$n_3$};
    \draw[edge] (n1) to (n2);
    \node[MyEdge] at (0.55,0) {$\!e_1\!$};
    \draw[edge] (n2) to (n3);
    \node[MyEdge] at (1.75,0) {$\!e_2\!$};
    \draw[edge][bend right=25] (n1) to (n3);
    \node[MyEdge] at (1.2,-0.4) {$\!e_1\scric e_2\!$};
    \draw[rounded corners=3mm] (-0.4,-0.7) rectangle (2.8,0.3); 
\end{tikzpicture} \\
\end{array} 
\qquad\quad  
\begin{array}{ccc}
\begin{tikzpicture}
    \node[MyNode] (n2) at (1.5,0) {$n_2$};
    \draw[rounded corners=3mm] (1.1,-0.3) rectangle (1.9,0.3); 
\end{tikzpicture}
& \raisebox{2ex}{$\to$} & 
\begin{tikzpicture}
    \node[MyNode] (n1) at (0,0) {$n_1$};
    \node[MyNode] (n2) at (1.5,0) {$n_2$};
    \draw[edge] (n1) to (n2);
    \node[MyEdge] at (0.75,0) {$\!e_1\!$};
    \draw[rounded corners=3mm] (-0.4,-0.3) rectangle (1.9,0.3); 
\end{tikzpicture} \\
\downarrow & & \downarrow \\ 
\begin{tikzpicture}
    \node[MyNode] (n2) at (1.5,0) {$n_2$};
    \node[MyNode] (n3) at (2.7,0) {$n_3$};
    \draw[edge] (n2) to (n3);
    \node[MyEdge] at (2.05,0) {$\!e_2\!$};
    \draw[rounded corners=3mm] (1.1,-0.3) rectangle (3.1,0.3); 
\end{tikzpicture} 
& \raisebox{2ex}{$\to$} & 
\begin{tikzpicture}
    \node[MyNode] (n1) at (0,0) {$n_1$};
    \node[MyNode] (n2) at (1.2,0) {$n_2$};
    \node[MyNode] (n3) at (2.4,0) {$n_3$};
    \draw[edge] (n1) to (n2);
    \node[MyEdge] at (0.55,0) {$\!e_1\!$};
    \draw[edge] (n2) to (n3);
    \node[MyEdge] at (1.75,0) {$\!e_2\!$};
    \draw[rounded corners=3mm] (-0.4,-0.3) rectangle (2.8,0.3); 
\end{tikzpicture} \\
\end{array} 
$$
\normalsize 
The left pushout, representing the union in $\Mod(\Sk_\cqu)$, is
clearly not pointwise, since the edge $[e_1\cric e_2]$ at the vertex
of the pushout has no antecedent. The right pushout in
$\Mod(\Rk_\cqu)$ is likewise not pointwise, despite
appearances. Recall that each object in this pushout interprets the points
$N,N',E,K$ and $K'$ of the sketch $\Rk_\cqu$. In particular, $K$ is
interpreted at the vertex of the pushout as $K =\{(e_1,e_2)\}$
representing the pair of consecutive edges $(e_1, e_2)$, whereas $K$ is
interpreted as the empty set at every object in the base of the pushout.

\end{example}

\subsec{Stuttering sketches} 
\label{ssec:stutter-sketches}

\nota 
A cone $C$ on a diagram $D$ with vertex $V$ and 
projections $p_X:V\to X$ is denoted $\coneto{p}{V}{D}$. 
The definition of a relation over a diagram $D$ 
(in Section~\ref{ssec:structure-sketches}) 
is based on two consecutive limits: 
first the limit of $D$, then a mono over the vertex of this limit. 
In this Section we show that it can be defined with only 
one limit (Theorem~\ref{theo:stutter}). 

The following Definitions~\ref{def:stt-diagram} and~\ref{def:stt-cone} for 
\emph{stuttering diagrams} and \emph{stuttering cones}, as well as 
Theorem~\ref{theo:stutter}, hold in any category with the required limits. 

\begin{definition}[stuttering diagrams]\ 
\label{def:stt-diagram}
Given a commutative cone $C$ with base $D$, 
the \emph{stuttering diagram} associated to $C$, denoted $C+_DC$, 
is made of two copies of $C$ glued along~$D$. 
\end{definition}

Let $C=\coneto{p}{V}{D}$ with $D$ denoted $X\dots Y$, then:

\footnotesize
$$  
C+_DC = 
\xymatrix@C=1pc@R=1pc{
V \ar[d]_{p_X} \ar[drr]^(.3){p_Y} && 
  V \ar[dll]_(.3){p_X} \ar[d]^{p_Y} \\ 
X & \dots & Y \\
} 
$$
\normalsize

\begin{remark}
\label{rem:stutter}
Any commutative cone $C'$ on $C+_DC$ is characterized by its vertex 
$W$ and the projections $q_1,q_2:W\to V$ for both copies of $V$, 
which must satisfy $p_X\circ q_1 = p_X\circ q_2$ for each $X$ in $D$. 
Then for each point $X$ in $D$ the projection $q_X:W\to X$ in $C'$ 
is $q_X = p_X\circ q_1 = p_X\circ q_2$. 

\footnotesize
$$  C' = 
\xymatrix@C=1.5pc@R=1pc{
& W \ar[dl]_{q_1} \ar[dr]^{q_2} & \\ 
V \ar[d]_{p_X} \ar[drr]^(.3){p_Y} && 
  V \ar[dll]_(.3){p_X} \ar[d]^{p_Y} \\ 
X & \dots & Y \\
} 
$$
\normalsize
\end{remark}

\begin{definition}[stuttering cones]\ 
\label{def:stt-cone}
Given a commutative cone $C=(\coneto{p}{V}{D})$ on $D$,  
the \emph{stuttering cone associated to} $C$, denoted $\Stt(C)$, 
is the commutative cone with base $C+_DC$, with vertex $V$ 
and with projections $\id_V:V\to V$ for both copies of $V$,  
so that its other projections are $p_X:V\to X$, as for $C$. 
A \emph{stuttering cone on} $D$ is a stuttering cone associated to
some commutative cone on $D$. 
\end{definition}

\footnotesize
$$ 
\Stt(C)= 
\xymatrix@C=1pc@R=1pc{  
& V \ar[dl]_{\id_V} \ar[dr]^{\id_V} 
  & \\
V \ar[d]_{p_X} \ar[drr]^(.3){p_Y} && 
  V \ar[dll]_(.3){p_X} \ar[d]^{p_Y} \\ 
X & \dots & Y \\
} 
$$
\normalsize

In the category of sets, 
Theorem~\ref{theo:stutter} implies that a relation, usually defined as  
a subset of a cartesian product (which involves two nested limits), 
can also be defined as the vertex of a unique limit.  

\begin{theorem}
\label{theo:stutter}
Let $D$ be a diagram, $C_0=(\coneto{p_0}{V_0}{D})$ the limit of $D$
and $C=(\coneto{p}{V}{D})$ any commutative cone on $D$.
Let $v:V\to V_0$ be the unique arrow such that
$p_X=p_{0,X}\circ v:V\to X$ for each point $X$ of $D$.
Then $v$ is a mono if and only if $\Stt(C)$ is the limit of $C+_DC$.
\end{theorem}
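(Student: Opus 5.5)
By Remark~\ref{rem:stutter}, a commutative cone $C'$ on $C+_DC$ is the same thing as an object $W$ with a pair $q_1,q_2:W\to V$ such that $p_X\circ q_1=p_X\circ q_2$ for every point $X$ of $D$. The cone $\Stt(C)$ is the pair $(\id_V,\id_V)$. Saying that $\Stt(C)$ is the limit of $C+_DC$ therefore means the following. For every such pair $(q_1,q_2)$ there is a unique $u:W\to V$ with $\id_V\circ u=q_1$ and $\id_V\circ u=q_2$.

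Existence of $u$ forces $q_1=q_2$, and then $u=q_1$ is the only candidate, so uniqueness is automatic. So the plan is to first reduce the limit condition to the statement
\[
(\ast)\quad \text{for all } q_1,q_2:W\to V,\ \bigl(\forall X,\ p_X\circ q_1=p_X\circ q_2\bigr)\Rightarrow q_1=q_2 ,
\]
that is, to the family $(p_X)_X$ being jointly monic. One point needs care here. The cone maps to $X$ in the doubled diagram are determined by $q_1$ (equivalently $q_2$), and the limit condition only asks for factorization through the two copies of $V$. So the reduction is exactly this bookkeeping, and I expect no real obstacle in it.

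The second step is to show that $(p_X)_X$ is jointly monic if and only if $v$ is mono. This uses the universal property of $C_0$.

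Suppose $v$ is mono and $p_X q_1=p_X q_2$ for all $X$. Then $p_{0,X}(vq_1)=p_{0,X}(vq_2)$ for all $X$. By uniqueness of factorization through the limit $C_0$, $vq_1=vq_2$, hence $q_1=q_2$.

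Conversely, suppose $(p_X)_X$ is jointly monic and $vq_1=vq_2$. Then $p_Xq_1=p_{0,X}vq_1=p_{0,X}vq_2=p_Xq_2$ for all $X$, so $q_1=q_2$. This shows that $v$ is mono.

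Combining the two steps gives the equivalence. The only subtle point I anticipate is making precise that $\Stt(C)$ is indeed a commutative cone on $C+_DC$, so that the limit question makes sense. This holds because $C$ is commutative.
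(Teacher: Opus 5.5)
Your proposal is correct and follows essentially the same route as the paper: you use Remark~\ref{rem:stutter} to reduce the limit condition to joint monicity of the family $(p_X)_X$, then use the uniqueness part of the universal property of $C_0$ (via $p_X=p_{0,X}\circ v$) to identify that condition with $v$ being monic. The only difference is presentational: you name the intermediate ``jointly monic'' condition and argue the final equivalence one direction at a time, whereas the paper runs a single chain of equivalences.
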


\begin{proof}
Using Remark~\ref{rem:stutter}, $\Stt(C)$ is the limit of $C+_DC$ 
if and only if for every cone on the diagram $C+_DC$,
with vertex $W$ and projections $q_1,q_2:W\to V$, there is 
a unique morphism $q:W\to V$ such that $q_1=q$ and $q_2=q$.
Equivalently, if and only if for every 
$q_1,q_2:W\to V$ such that $p_X\circ q_1 = p_X\circ q_2$ for each $X$ 
in $D$ we have $q_1=q_2$.
The condition $p_X\circ q_1 = p_X\circ q_2$ can be written as 
$p_{0,X}\circ v\circ q_1 = p_{0,X}\circ v\circ q_2$.
Since $C_0$ is the limit of $D$ this condition 
is satisfied for every $X$ in $D$ if and only if $v\circ q_1 = v\circ q_2$. 
Thus, $\Stt(C)$ is the limit of $C+_DC$ if and only if
for every $q_1,q_2:W\to V$ such that $v\circ q_1 = v\circ q_2$
we have $q_1=q_2$: this means that $v$ is monic.
 
$$  
\xymatrix@C=1.5pc@R=1.5pc{
W \ar[rrd]_(.4){q_1} \ar[rrrrd]_(.4){q_2} \ar[rrr]^{q} &&  
& V \ar[dl]_{\id_V} \ar[dr]^{\id_V} \ar[rrr]^{v}  
  &&& V_0 \ar@/^3ex/[ddllll]^(.3){p_{0,X}} \ar@/^3ex/[ddll]^(.3){p_{0,Y}} \\ 
&& V \ar[d]_{p_X} \ar[drr]^(.3){p_Y} && 
  V \ar[dll]_(.3){p_X} \ar[d]^{p_Y} && \\ 
&& X & \dots & Y && \\
} 
$$
\end{proof}

\begin{definition}[stuttering sketch]\  
\label{def:stt-sketch}
A \emph{stuttering sketch} is a sketch such that all 
its potential limits are stuttering cones. 
\end{definition}

For instance, for quivers with partial concatenation,  
the stuttering sketch $\Rk'_\cqu$ below is equivalent to $\Rk_\cqu$,
as defined in Section~\ref{ssec:inference-paths}.

\footnotesize
$$ \Rk'_\cqu = 
\sketch{
\begin{array}{lll}
\raisebox{5ex}{
\xymatrix@C=3pc@R=1.3pc{
N & E \ar@/_/[l]_{s} \ar@/^/[l]^{t} & \\ 
N' \ar@/_/[ru]_(.4){\empt} \ar[u]^{m_N} 
&& 
K' \ar@/_4ex/[lu]_{p'_1} \ar@/_1ex/[lu]_{p'_2} 
  \ar@/^3ex/[lu]^{\concat} \\ 
} } & 
\begin{array}{l}
\mbox{with stuttering limits:} \\ 
\xymatrix@C=1pc@R=0.2pc{  
& K' \ar[dl]_{\id_{K'}} \ar[dr]^{\id_{K'}} 
  & \\
K' \ar[dd]_{p'_1} \ar[ddrr]^(.3){p'_2} && 
  K' \ar[ddll]_(.3){p'_1} \ar[dd]^{p'_2} \\ 
&& \\ 
E \ar[dr]_{t}  & & E \ar[dl]^{s} \\
& N & \\ 
} 
\xymatrix@C=0pc@R=0.5pc{  
& N' \ar[dl]_{\id_{N'}} \ar[dr]^{\id_{N'}} & \\
N' \ar[dr]_{m_N} && N' \ar[dl]^(.3){m_N} \\ 
& N & \\ 
} 
\end{array} & 
\begin{array}{l}
\mbox{and equations:} \\ 
s\circ \concat \equiv s\circ p'_1 \\ 
t\circ \concat \equiv t\circ p'_2 \\
s\scirc\empt \equiv m_N \\
t\scirc\empt \equiv m_N \\
\end{array} \\ 
\end{array}
\\ } 
$$ 
\normalsize 

\begin{theorem}
\label{theo:union} 
The finite union of compatible models of a stuttering sketch is a  
pointwise colimit. 
\end{theorem}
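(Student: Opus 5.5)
The plan is to build the union explicitly, pointwise, inside the common model $M_\ast$, check that it is a model of the stuttering sketch $\Sk$, and then check that it has the universal property of the colimit. Let $M_1,\dots,M_n$ be compatible models of $\Sk$, all included in $M_\ast$. For each point $X$ of $\Sk$ set
$$U(X)=M_1(X)\cup\dots\cup M_n(X)\subseteq M_\ast(X),$$
and for each arrow $f:X\to Y$ let $U(f)$ be the restriction of $M_\ast(f)$ to $U(X)$. This is well defined with values in $U(Y)$: an element of $M_i(X)$ is sent by $M_\ast(f)$ to $M_i(f)$ of it, which lies in $M_i(Y)$, because the inclusion $M_i\subseteq M_\ast$ is a morphism of models. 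So $U$ is a model of the underlying quiver of $\Sk$, and it is included in $M_\ast$.

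Next I check that $U$ is a model of $\Sk$. Potential identities and potential composites hold in $U$ because they are equalities of functions in $M_\ast$ and restriction preserves them. The same holds for equations, which only assert equalities of composites or commutativity. The essential point is the potential limits, and here the stuttering hypothesis is used. Every potential limit of $\Sk$ is of the form $\Stt(C)$ with $C=(\coneto{p}{V}{D})$. By Remark~\ref{rem:stutter} and the proof of Theorem~\ref{theo:stutter}, read in the category of sets, $\Stt(C)$ is a limit in a model exactly when the cone $C$ is commutative and its projections are jointly injective. This condition contains no existence clause, only a uniqueness clause. Commutativity of $C$ in $U$ is inherited from $M_\ast$. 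Joint injectivity of the family $U(p_X)$ on $U(V)\subseteq M_\ast(V)$ is inherited because a restriction of a jointly injective family is jointly injective. So $U$ is a model, and each inclusion $M_i\subseteq U$ is a morphism of models.

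I expect the main obstacle to be the previous step: making precise that a stuttering potential limit imposes only a "jointly monic" condition. For an ordinary product or pullback cone this step fails, because a union of subsets of a product is not the product of the unions. This is exactly the failure of the pointwise pushout in Example~\ref{ex:stutter-not-pointwise}, where $K$ acquires the new element $(e_1,e_2)$.

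Finally, I show that $U$ is the colimit in $\Mod(\Sk)$ of the diagram formed by the $M_i$ and their pairwise intersections $M_i\cap M_j$ with the inclusions. For two models this is the pushout defining $M_1\cup M_2$, and the general finite case follows either by the same argument or by induction. Intersections are pullbacks and hence pointwise, so $(M_i\cap M_j)(X)=M_i(X)\cap M_j(X)$. Take a cocone $g_i:M_i\to N$ that is compatible on the intersections. Define $g_X:U(X)\to N(X)$ by $g_X(x)=g_{i,X}(x)$ for any $i$ with $x\in M_i(X)$. This is well defined because $g_i$ and $g_j$ agree on $M_i(X)\cap M_j(X)$. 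Naturality of $g$ is checked elementwise, using naturality of the $g_i$, and uniqueness holds because the sets $M_i(X)$ cover $U(X)$. Since $U(X)$ is by construction the colimit in sets of the $M_i(X)$ glued along the $M_i(X)\cap M_j(X)$, this colimit is preserved by each underlying functor $U_{\sk_X}$, that is, it is pointwise.
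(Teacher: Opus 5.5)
Your proposal is correct and follows essentially the paper's route: the pointwise union is already a model of $\Sk$, because by Theorem~\ref{theo:stutter} (read in sets) a stuttering potential limit imposes only commutativity and joint injectivity, and both are inherited by restriction to subsets of $M_\ast$. The differences are minor: you check the pushout's universal property elementwise, whereas the paper gets it from the reflection $F_\sk\dashv U_\sk$ for the inclusion of the underlying quiver (the colimit in $\Mod(\Sk)$ is $F_\sk$ of the pointwise colimit, which is left unchanged once it is a model), and you state explicitly that injectivity comes from the ambient model $M_\ast$, which the paper's argument through the auxiliary limit vertex $V_0$ uses only implicitly.
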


\begin{proof}
It is sufficient to prove this result for two models. 
In the category $\Mod(\Sk)$, the intersection $M_1\cap M_2$ is pointwise 
and the union $M_1\cup M_2$ is the colimit of the inclusions
$M_1\supseteq M_1\cap M_2 \subseteq M_2$, which is not pointwise in general.  
Let $\Qu$ be the underlying quiver of $\Sk$ and $\sk:\Qu\to\Sk$ the inclusion. Then 
the intersection of $M_1$ and $M_2$ is the same in $\Mod(\Sk)$ and in $\Mod(\Qu)$, 
and the colimit $M_3$ of $M_1\supseteq M_1\cap M_2 \subseteq M_2$ in $\Mod(\Sk)$
is the image by $F_{\sk}$ of the colimit $N_3$ of 
$M_1\supseteq M_1\cap M_2 \subseteq M_2$ in $\Mod(\Qu)$. 
Thus $N_3=M_1\cup M_2$ in $\Mod(\Qu)$, and if $N_3$ is in $\Mod(\Sk)$ 
then $M_3=N_3=M_1\cup M_2$ in $\Mod(\Sk)$.
Now let us prove that $N_3$ is in $\Mod(\Sk)$, which means that 
it takes every limit of $\Sk$ to a limit in the category of sets.
We consider any potential limit of $\Sk$, it is a cone $C'=\Stt(C)$ 
with base $C+_DC$ for some diagram $D$ and some cone $C$ with base $D$. 
Let $V$ be the common vertex of $C$ and $C'$.
Let $\Sk'$ be obtained by adding to $\Sk$ a limit of $D$, 
with vertex denoted $V_0$, and the canonical arrow $v:V\to V_0$
with the corresponding equations. Then $\Sk'$ is equivalent to $\Sk$. 
By Theorem~\ref{theo:stutter} we have to prove that $N_3(v)$ is injective. 
We have $N_3(V) = M_1(V) \cup M_2(V)$
with $M_1(V) \subseteq M_1(V_0)$ and $M_2(V) \subseteq M_2(V_0)$,
so that $N_3(V) \subseteq M_1(V_0)\cup M_2(V_0)$.
Since $M_1(V_0)\cup M_2(V_0) = N_3(V_0)$ we get $N_3(V) \subseteq N_3(V_0)$.
Thus, $N_3$ is a model of $\Sk$, so that $M_3=N_3$ 
and $M_3$ is a pointwise colimit in $\Mod(\Sk)$, as required.
\end{proof}

\begin{example}
\label{ex:stutter-pointwise} 
In the right square of Example~\ref{ex:stutter-not-pointwise}, the
union computed in $\Mod(\Rk_\cqu)$ is not pointwise, because of the
interpretation of the point $K$ from $\Rk_\cqu$.  However, as $\Rk'_\cqu$ is
equivalent to $\Rk_\cqu$, this right square is also a union in
$\Mod(\Rk'_\cqu)$ and this union is pointwise.
\end{example}

\section{Conclusion} 
\label{sec:conclusion}
In this paper, we demonstrated how sketch-oriented databases, built on
the foundation of finite-limit sketches, provide a powerful and
flexible framework for modeling graph databases. By leveraging
category theory, we have shown that finite-limit sketches can
effectively represent various database paradigms, including RDF
graphs and property graphs. Application to
relational databases can be easily obtained by encoding relations and
tables (see Appendix~\ref{app:tables}).
The diagrammatic nature of sketches is close to established database
formalisms, such as Entity-Relationship diagrams, UML and graphical
schema languages.
Advanced features such as inference
systems, paths, schemas, and typing were explored, highlighting their
compatibility with sketch-oriented approaches. The introduction of
stuttering sketches offers a novel method for simplifying database
structures and ensuring pointwise unions of models, addressing
challenges in managing complex data relationships. This concept opens
new avenues for specifying database operations and maintaining
consistency across databases.  Sketch-oriented databases unify
theoretical rigor with practical applicability, making them a
promising tool for representing and querying complex data
structures. Future work include extensions of the presented framework
to algebraic query answering, as well as the investigation of additional
applications in semantic web and ontology-based systems.


\newpage
\appendix
\section{Relational databases}\
\label{app:tables}   
Although this paper focuses on graph databases, it is worth noting that  
the relational database paradigm can also be seen as sketch-oriented.  
In this Appendix we propose some hints for this purpose. 
More issues, like adding constraints about primary keys and 
foreign keys relating several tables, could be added. 

A \emph{table} is made of a set of \emph{rows}, 
a set of \emph{columns} and a set of \emph{entries}, 
with at most one entry $e(x,y)$ for each row $x$ and column $y$. 
Thus, a table is a model of the following sketch, 
with $\Roww$, $\Coll$, $\Ent$, $\Cell$ and ${\it NECells}$ 
interpreted respectively as the sets of rows, columns, entities, 
cells and non-empty cells. 

\small
$$ \Sk_\tabl = 
\sketch{
\raisebox{4ex}{$
\xymatrix@C=1.5pc@R=.3pc{
&& \Cell \ar[lld]_{\row} \ar[rd]^{\col} 
  & \\ 
\Roww & \Ent\! & & \Coll \\ 
&& \NECell \ar[uu]_(.4){m_{\Cell}} \ar[lu]^(.6){\ent} & \\ 
}
$}  
\quad 
\begin{array}{l}
\mbox{with a product:} \\
\xymatrix@C=-.5pc@R=1pc{
& \Cell \ar[dl]_{\row} \ar[dr]^{\col} & \\ 
\Roww && \Coll \\ 
} 
\\
\end{array} 
\quad 
\begin{array}{l}
\mbox{and a mono:} \\ 
\xymatrix@C=2pc{
\NECell\, \ar@{>->}[r]^(.55){m_{\Cell}} & \Cell \\ 
} 
\\
\end{array} 
\\ } 
$$ 
\normalsize

Besides, one can observe that each table can be associated to a quiver
with attribute-value pairs on nodes, as follows: no edge, one node for
each row of the table, one attribute for each column, one value for
each entry, and for each non-empty cell $(r,c)$ one attribute-value
pair $c\prttt v$ on the node $r$, where $v= {\it entry}(r,c)$.  Thus,
as in Section~\ref{sssec:structure-avpairs}, each table can be
associated to a strongly labeled quiver, with one node for each row of
the table plus one node for each entry, one label for each column, and
for each non-empty cell $(r,c)$ one edge $\tr{(r)}{c}{(v)}$, where
$v= {\it entry}(r,c)$.
The fact that each table is associated to a quiver with attribute-value pairs
on nodes can be seen as the application of the underlying functor associated 
to the pleomorphism from $\Sk_{\avp,X}$ to $\Sk_\tabl$ which takes 
$X,V,A$ to $\Roww,\Ent,\Coll$, thus $T'$ to the product 
$\Roww\stimes\Ent\stimes\Coll$, isomorphic to $\Cell\stimes\Ent$, 
and which takes $T$ to $\NECell$ and $m_T$ to the pair made of 
$m_\Cell$ and $\ent$. 

\begin{example} 
\label{ex:tables} 
Part of the data in Example~\ref{ex:structure-schema} is now stored as a table: 

\small
$$ 
\begin{array}{|l|l|l|l|}   
\hline
 & \Pe & \Bo & \Jo \\ 
\hline
r_1 & {\rm MB} & {\rm CTCS} & {\rm RTAC} \\ 
r_2 & {\rm CW} & {\rm CTCS} & {\rm RTAC} \\ 
r_3 & {\rm SML} & {\rm CWM} & \\ 
\hline
\end{array}
$$
\normalsize
Here is the associated quiver with attribute-value pairs: 

\small
$$
\begin{tikzpicture} 
    \node[MyNodeDouble] (r1) at (0,0) {
      $r_1$
      \nodepart{two}
        {\small $ \begin{array}{l} 
        \Pe\prttt{\rm MB} \\ 
        \Bo\prttt{\rm CTCS} \\ 
        \Jo\prttt{\rm RTAC} \\ 
        \end{array} $ }};
    \node[MyNodeDouble] (r2) at (3,0) {
      $r_2$
      \nodepart{two}
        {\small $ \begin{array}{l} 
        \Pe\prttt{\rm CW} \\ 
        \Bo\prttt{\rm CTCS} \\ 
        \Jo\prttt{\rm RTAC} \\ 
        \end{array} $ }};
    \node[MyNodeDouble] (r3) at (6,0) {
      $r_3$
      \nodepart{two}
        {\small $ \begin{array}{l} 
        \Pe\prttt{\rm SML} \\ 
        \Bo\prttt{\rm CTCS} \\ 
        \end{array} $ }};
    \draw[rounded corners=5mm] (-1.3,-1.2) rectangle (7.3,1.2); 
\end{tikzpicture}
$$
\normalsize
and the associated strongly labeled quiver
(for readability, only some labels are noted):

\small
$$  
\begin{tikzpicture}[yscale=0.6] 
    \node[MyNode] (r1) at (0,5) {$r_1$};
    \node[MyNode] (r2) at (0,2.5) {$r_2$};
    \node[MyNode] (r3) at (0,0) {$r_3$};
    \node[MyNode] (mb) at (8,5) {MB};
    \node[MyNode] (cw) at (8,4) {CW};
    \node[MyNode] (sml) at (8,3) {SML};
    \node[MyNode] (ctcs) at (8,2) {CTCS}; 
    \node[MyNode] (cwm) at (8,1) {CWM};
    \node[MyNode] (rtac) at (8,0) {RTAC};
    \draw[edge] (r1) to (mb);
    \draw[edge] (r1) to (ctcs);
    \draw[edge] (r1) to (rtac);
    \draw[edge] (r2) to (cw);
    \draw[edge] (r2) to (ctcs);
    \draw[edge] (r2) to (rtac);
    \draw[edge] (r3) to (sml);
    \draw[edge] (r3) to (cwm);
    \node[MyAtt] (pe1) at (4,5) {$\Pe$};
    \node[MyAtt] (bo1) at (3.3,3.8) {$\Bo$};
    \node[MyAtt] (jo1) at (2,3.8) {$\Jo$};
    \draw[rounded corners=5mm] (-0.5,-0.6) rectangle (8.7,5.6); 
\end{tikzpicture}
$$
\normalsize
\end{example} 

This point of view is compatible with the notion of schema and typing 
in Section~\ref{sssec:structure-schema}.

\begin{example} 
\label{ex:tables-schema} 
The schema for Example~\ref{ex:tables} is the one-row table: 

\small
$$ 
\begin{array}{|l|l|l|l|}   
\hline
 & \Pe & \Bo & \Jo \\ 
\hline
{\it Biblio} & \String & \String & \String \\ 
\hline
\end{array}
$$
\normalsize
Here are the associated quiver with attribute-value pairs and 
the associated strongly labeled quiver:

\small
$$
\begin{tikzpicture} 
    \node[MyNodeDouble] (r1) at (0,0) {
      {\it Biblio} 
      \nodepart{two}
        {\small $ \begin{array}{l} 
        \Pe\prttt\String \\ 
        \Bo\prttt\String \\ 
        \Jo\prttt\String \\ 
        \end{array} $ }};
    \draw[rounded corners=5mm] (-1.3,-1.2) rectangle (1.3,1.2); 
\end{tikzpicture}
\qquad \qquad 
\begin{tikzpicture} 
    \node[MyNode] (bib) at (0,0) {${\it Biblio}$};
    \node[MyNode] (str) at (4,0) {$\String$};
    \draw[edge][bend left=20] (bib) to (str);
    \draw[edge] (bib) to (str);
    \draw[edge][bend right=20] (bib) to (str);
    \node[MyAtt] (pe) at (2,0.5) {$\Pe$};
    \node[MyAtt] (bo) at (2,0) {$\Bo$};
    \node[MyAtt] (jo) at (2,-0.5) {$\Jo$};
    \draw[rounded corners=5mm] (-0.7,-0.7) rectangle (4.7,0.7); 
\end{tikzpicture}
$$
\normalsize
\end{example}

\end{document}